\DeclareMathOperator*{\argmin}{arg\!\min}
\DeclareMathOperator*{\argmax}{arg\!\max}
\newtheorem{theorem}{Theorem}
\newtheorem{proposition}[theorem]{Proposition}
\newtheorem{observation}[theorem]{Observation}
\newtheorem{lemma}[theorem]{Lemma}
\newcommand{\qed}{\hspace*{\fill} $\square$  \ifmmode \else
    \par\addvspace\topsep\fi}
\newenvironment {proof}{\par\addvspace\topsep\noindent{\it Proof.}
    \ignorespaces }{\qed}
\newcommand{\claimqed}{\hspace*{\fill} $\triangle$  \ifmmode \else
    \par\addvspace\topsep\fi}
\providecommand{\keywords}[1]{\textbf{\textit{Keywords:}} #1}
\begin{document}


\title{Spectrum graph coloring and applications to WiFi channel assignment}







\author[1]{David Orden
\thanks{Email: david.orden@uah.es.}}

\author[2]{Jose Manuel Gimenez-Guzman
\thanks{Email: josem.gimenez@uah.es.}}

\author[2]{Ivan Marsa-Maestre
\thanks{Email: ivan.marsa@uah.es.}}

\author[2]{Enrique de la Hoz
\thanks{Email: enrique.delahoz@uah.es.}}

\affil[1]{Departamento de F\'{\i}sica y Matem\'aticas, Universidad de Alcal\'a, Spain, \texttt{david.orden@uah.es}}
\affil[2]{Departamento de Autom\'atica, Universidad de Alcal\'a, Spain, \texttt{[josem.gimenez|ivan.marsa|enrique.delahoz]@uah.es}}

\date{}

\maketitle


\begin{abstract}
We introduce and explore a family of vertex-coloring problems which, surprisingly enough, have not been considered before despite stemming from the problem of Wi-Fi channel assignment. Given a spectrum of colors, endowed with a matrix of interferences between each pair of colors, the \textsc{Threshold Spectrum Coloring} problem fixes the number of colors available and aims to minimize the interference threshold, i.e., the maximum of the interferences at the vertices. Conversely, the \textsc{Chromatic Spectrum Coloring} problem fixes a threshold and aims to minimize the number of colors for which respecting that threshold is possible. As main theoretical results, we prove tight upper bounds for the solutions to each problem.
Since both problems turn out to be NP-hard, we complete the scene with experimental results. We propose a DSATUR-based heuristic and study its performance to minimize the maximum vertex interference in Wi-Fi channel assignment, both for randomly generated graphs and for a real-world scenario. Further, for all these graphs we experimentally check the goodness of the theoretical bounds.
\end{abstract}

\keywords{
graph coloring, interference, chromatic number, DSATUR, frequency assignment, Wi-Fi channel assignment
}


 	

\section{Introduction}\label{sec:spectrum}

Graph coloring is, undoubtedly, one of the main problems in Discrete Mathematics, attracting researchers from both mathematics and engineering because of its theoretical challenges and its applications~\cite{mt-surveyVCP-10, tuza-inHandbook-03}. One of the most prominent applications of vertex-coloring problems is frequency assignment~\cite{ahkms-FAP-07}, with a huge variety of models ranging from the most naive, forbidding monochromatic edges,
to the most general, with assignment constraints, interference constraints, and an objective function.

The present work introduces and explores an intermediate model, which is shown to be interesting enough for mathematicians and useful enough for engineers: An abstract graph~$G$ and a spectrum of colors~$S=\{c_1,\ldots,c_{s}\}$ endowed with a matrix~$W$
of non-negative distances $W_{ij}=W(c_i,c_j)$ between each pair of colors, playing the role of interferences. Throughout this paper, the graph~$G$ will be undirected and, thus, the matrix~$W$ will be symmetric. For such a pair $(G,W)$, a coloring $c$ of the graph will induce at each vertex~$v$ an \emph{interference}
\[I_v(G,W,c)=\sum_{u\in N(v)}W(c(u),c(v)).\]

This model stems naturally from the problem of Wi-Fi channel assignment.
We all use the IEEE 802.11 (Wi-Fi) technology, which has been widely deployed in local area networks, mostly because of both its low cost and the use of unlicensed frequency bands.
However, for a successful deployment there are some performance issues to be managed, such as avoiding an excessive interference that will impact performance.
There is a direct relation between interference and perceived throughput~\cite{Bazzi11}, so that minimizing the interference would result in maximizing the available throughput.
This is why a frequency channel has to be selected for each access point (AP),
accounting for interferences between channels and trying to minimize the total interference in order to maximize performance.
Similar goals have been considered recently in the literature for related problems, like bandwidth allocation in cellular networks~\cite{bar2015bandwidth}.

In the recent past we have successfully used our model to find efficient frequency assignments both in Wireless
Surveillance Sensor Networks~\cite{hgmo-WSSNs-15} and in real-world scenarios~\cite{AAMAS-WiFi}. We have even applied our model to reconfigure a critical network in the event of a security incident~\cite{AAMAS-Resilience}. With the present paper we aim to formally introduce a mathematically rigorous description of the model, exploring two novel coloring problems for which we provide theoretical bounds, show their complexity, and finally propose and test a heuristic.

In particular, the remainder of this section introduces the two novel coloring problems and presents a case study to illustrate both of them, finishing with a compilation of relevant related work. Section~2 is devoted to theoretical results, proving upper bounds for the solutions to each problem. Section~3 starts showing that both problems are NP-hard, so a DSATUR-based heuristic is proposed for each problem and experimental results are performed to compare these heuristics with selected reference approaches for randomly generated graphs. Section~4 goes a step further by showing experimental results for a realistic setting. Finally, Section~5 summarizes our conclusions.

\subsection{Two novel coloring problems}
\label{subsec:TSC-CSC}

First, we introduce the problem which best fits the setting of Wi-Fi channel assignment, which we call \textsc{Threshold Spectrum Coloring} (TSC) problem: Given a graph~$G$ and a spectrum of~$k$ colors (channels), endowed with a $k\times k$ matrix~$W$ of interferences between them, the goal is to determine the minimum threshold~$t\in\mathbb{R}_{\geq 0}$ such that $(G,W)$ admits a $k$-coloring~$c$ (assignment of channels) in which the interference at every vertex is at most~$t$, i.e., $I_v(G,W,c)\leq t,\ \forall v$. Such a minimum~$t$ will be called the \emph{minimum $k$-chromatic threshold} of~$(G,W)$, denoted as~$T_k(G,W)$.

Then, since the TSC problem fixes the parameter~$k$ and aims to minimize the parameter~$t$, it is natural to wonder about the complementary problem: In the \textsc{Chromatic Spectrum Coloring} (CSC) problem, a threshold $t\in\mathbb{R}_{\geq 0}$ is fixed and the spectrum is let to have as size the number $|V(G)|$ of vertices, the goal being to determine the minimum number of colors (channels)~$k\in\mathbb{N}$ such that $(G,W)$ admits a $k$-coloring~$c$ (assignment of channels) in which the interference at every vertex is at most that threshold~$t$. Such a minimum~$k$ will be called the \emph{$t$-interference chromatic number} of~$(G,W)$, denoted as~$\chi_t(G,W)$.
This CSC problem has also a translation in the context of frequency assignment: It aims to compute the minimum number of frequencies (colors) that guarantee a minimum throughput to every user in the network, so it can be seen as the problem of determining the number of resources required to provide a certain quality of service.

\subsection{A case study}
\label{subsec:CaseStudy}

Let us illustrate these two problems by analyzing an easy, but non-trivial, example using the paw graph~$PG$~\cite{west-introGraphTheory-01}, see Figure~1 (right).

\begin{figure}[!htb]
\begin{center}
\includegraphics[width=\textwidth]{./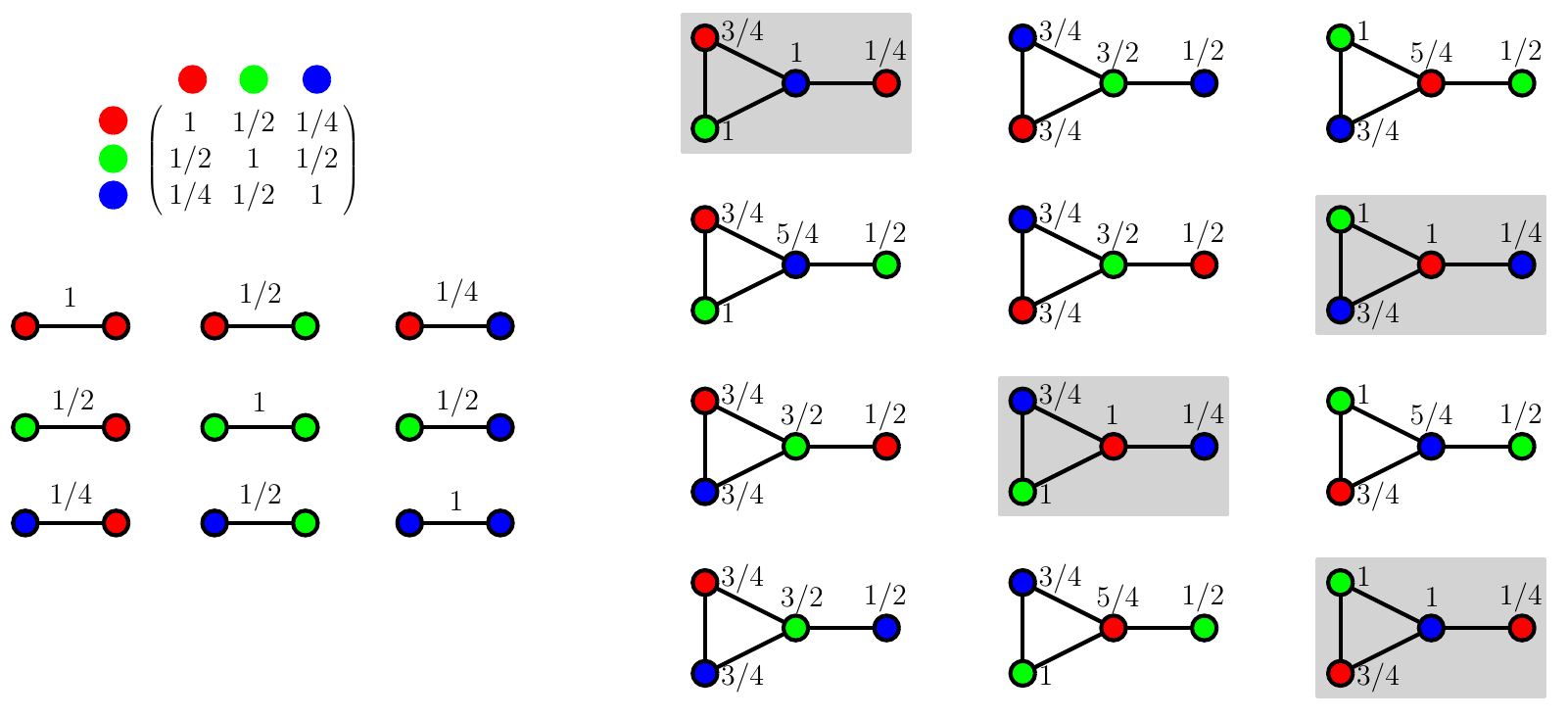}
\end{center}
\caption{Left: Matrix $W_{2ed}$ and the interferences it induces on the different possible colorings of an edge. Right: All proper 3-colorings of the paw graph and the interferences they induce at every vertex. Highlighted colorings achieve interference~$\leq 1$ at every vertex.} \label{fig:TotalInterferenceColoring}
\end{figure}

For the TSC problem, consider an spectrum of $k=3$ colors $$S=\{\mbox{red},\mbox{green},\mbox{blue}\}$$ endowed with the $3\times 3$ matrix of interferences with exponential decay of base~$2$,
\[
W_{2ed}=
\left(
\!\begin{array}{ccc}1 & 1/2 & 1/4\\1/2 & 1 & 1/2\\1/4 & 1/2 & 1
\end{array}
\!\right).
\]
See Figure~\ref{fig:TotalInterferenceColoring} (left) for an illustration.
In this TSC problem, we have a fixed number of colors~$k$ and we want to find the smallest possible interference threshold~$t$. Figure~\ref{fig:TotalInterferenceColoring}, right, shows all the proper (without monochromatic edges) $3$-colorings of the paw graph, together with the interference $I_v(PG,W_{2ed},c)$ at every vertex~$v$.
Observe that the highlighted colorings achieve interference at most~$t=1$ at every vertex. This is impossible for improper colorings, since the matrix~$W$ assigns interference~$1$ to monochromatic edges which, therefore, will lead to one of its endpoints having interference greater than~$1$.

Let us show now that, in this setting, no coloring can achieve a maximum vertex interference~$t$ strictly smaller than~$1$: Focus on the central vertex of the paw graph and observe that, according to the matrix~$W_{2ed}$, that vertex can only achieve an interference smaller than~$1$ if it is colored red and its neighbors are colored blue, or vice versa. But this gives rise to a monochromatic edge between the left vertices of the paw, which implies interference greater than~$1$ at those two vertices. See Figure~\ref{fig:ExampleTSC}.

\begin{figure}[!htb]
\begin{center}
\includegraphics[scale=0.5]{./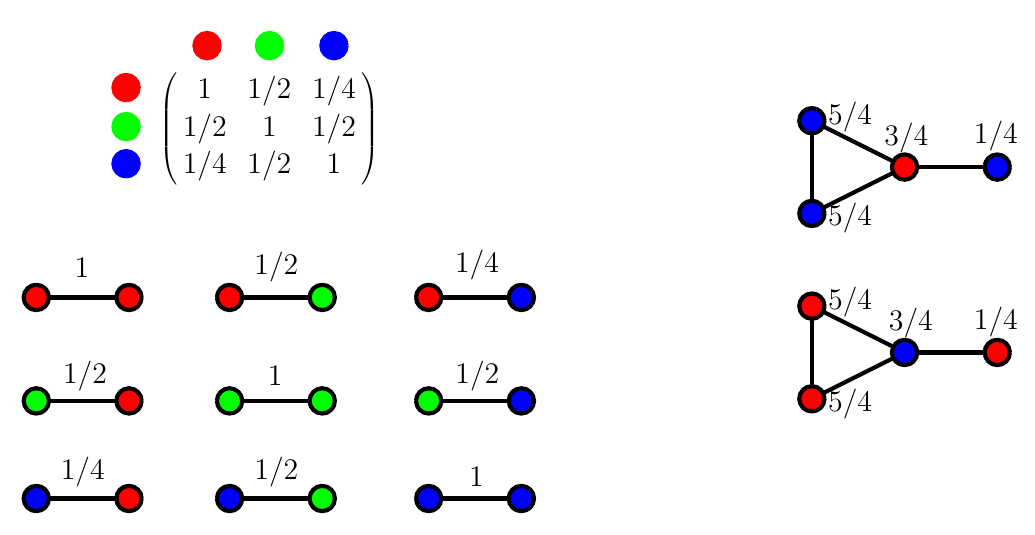}
\end{center}
\caption{Possibilities for the central vertex of the paw graph to have interference smaller than~$1$.} \label{fig:ExampleTSC}
\end{figure}

Therefore, we have solved the corresponding TSC problem, proving that for the paw graph~$PG$ with $k=3$ colors and a matrix~$W_{2ed}$ with exponential decay of base~$2$, the minimum chromatic threshold~$t$ achievable is~$1$, i.e., \[T_3(PG,W_{2ed})=1.\]

For the CSC problem, the size of the spectrum is let to equal the number of vertices, which is four in the paw graph~$PG$. Consider for example~$S=\{\mbox{red},\mbox{green},\mbox{blue},\mbox{violet}\}$, also endowed with the  matrix of interferences with exponential decay of base~$2$, in this case of size $4\times 4$,

\[
W_{2ed}=
\left(
\!\begin{array}{cccc}
1 & 1/2 & 1/4 & 1/8\\
1/2 & 1 & 1/2 & 1/4\\
1/4 & 1/2 & 1 & 1/2\\
1/8 & 1/4 & 1/2 & 1\\
\end{array}
\!\right).
\]

In this setting, one can fix an interference threshold~$t=1$ and aim to find the smallest number~$k$ of colors for which such a threshold is achievable. The discussion above for the previous problem shows that the threshold~$t=1$ can be achieved with three of those colors, i.e., $k=3$. We now show that such a threshold is impossible to achieve with only $k=2$ colors among the four colors available: Focus again on the central vertex of the paw graph~$PG$. If it has a neighbor with its same color, then a monochromatic edge arises and, as above, this implies the interference at some vertex being greater than one. If the central vertex has no neighbor with its same color, then all its neighbors share a common color, so that again a monochromatic edge arises and there is a vertex with interference greater than one. See Figure~\ref{fig:ExampleCSC}.

\begin{figure}[htb]
\begin{center}
\includegraphics[width=\textwidth]{./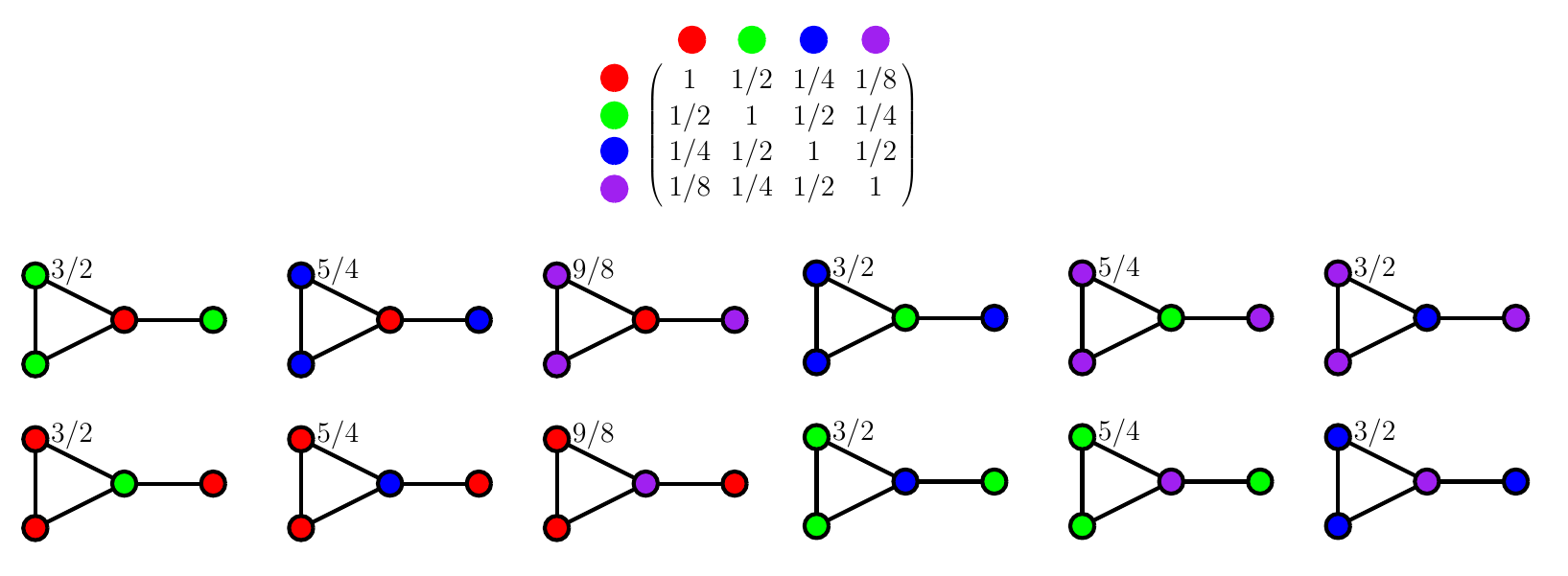}
\end{center}
\caption{Possibilities for the central vertex of the paw graph to have no neighbor with its same color, when using~$k=2$ colors among the~$4$ colors available.} \label{fig:ExampleCSC}
\end{figure}

Therefore, we have solved the corresponding CSC problem proving that, for the paw graph~$PG$ with threshold $t=1$ and a matrix~$W_{2ed}$ with exponential decay of base~$2$, the interference chromatic number is~$3$, i.e., \[\chi_1(PG,W_{2ed})=3.\]

\subsection{Related work}

Araujo et al.~\cite{abghmm-WeightedImproper-12} consider a weight function~$w$ on the edges of the graph~$G$ instead of a matrix~$W$ of interferences between colors, defining the interference at a vertex to be the sum of weights of incident monochromatic edges. Although their bounds are similar to ours, using analogous techniques, we further show tightness and follow a different, more detailed, scheme.
Many other works impose conditions to the colors of the endpoints of any edge. Most of them can be framed into \emph{$L(p_1,\ldots,p_k)$-labellings}~\cite{gk-GraphLabellings-09}, where vertices at distance~$i$ must get colors at distance at least~$p_i$. Particular instances include \emph{distance coloring}~\cite{sharp-distancecoloring-07}, where no two vertices at distance at most~$d$ can have the same color, \emph{$\lambda$-coloring}~\cite{bktl-lambdacoloring-00}, where adjacent vertices must get colors at least~$2$ apart and vertices at distance at most~$ 2$ must get different colors, as well as \emph{$L(h,k)$-labellings}~\cite{c-LhkLabellingSurvey-11}, where adjacent vertices must get colors at least~$h$ apart and vertices at distance~$2$ must get colors at least~$k$ apart.

For another setting, in a \emph{bandwidth coloring}~\cite{mt-surveyVCP-10} a distance~$d_{ij}$ is associated to each particular edge~$ij$, forcing its endpoints to get colors at least~$d_{ij}$ apart.
A similar flavor has the \emph{$T$-coloring}~\cite{roberts-Tcolorings-91}, where adjacent vertices must get colors whose distance is not in a prescribed set~$T$, and the \emph{$S$-coloring}~\cite{gastineau-Scoloring-15}, where the vertices have to be partitioned into sets with prescribed pairwise distances.

Of a different nature is the approach of coloring the edges of the graph instead of the vertices~\cite{edge-coloring1986}, an approach which has been applied to minimize the number of distinct channels used to route messages in a network~\cite{rainbow2017}.

Finally, of independent interest are the works on the Frequency Assignment Problem, which use different models to fit the characteristics of specific applications. The bibliography about this problem is too extensive to include here even a selection. Instead, we prefer referring the reader to the comprehensive survey by Aardal et al.~\cite{ahkms-FAP-07}.

\section{Theoretical results}

In this section we prove theoretical upper bounds for the goal of the TSC problem, the minimum $k$-chromatic threshold~$T_k(G,W)$, and for the aim of the CSC problem, the $t$-interference chromatic number~$\chi_t(G,W)$.

Given a graph~$G$, a matrix~$W$, and a coloring~$c$, we denote the \emph{potential interference} at vertex~$v$ if this vertex had color~$i$ as \[I_v^i(G,W,c)=\sum_{u\in N(v)}W(c(u),i).\] Note that for~$i$ the actual color of~$v$, i.e., $i=c(v)$, we get the  actual interference at vertex~$v$, that is, $I_v^{c(v)}(G,W,c)=I_v(G,W,c)$.

We will say that a $k$-coloring~$c$ of~$G$ is \emph{$W$-stable} if, for every vertex, the actual interference is not greater than any of the potential interferences, i.e., if
for every vertex~$v$ we have $I_v(G,W,c)\leq I_v^{j}(G,W,c)$ for all $j\in\{1,\ldots,k\}$. We first prove that stable colorings do exist:

\begin{proposition}
\label{proposition:stableColoring}
Given a graph~$G$ and a spectrum~$S$ of size $|S|\geq 2$ endowed with a matrix~$W$ of interferences, there exists a $W$-stable coloring of~$G$.
\end{proposition}
\begin{proof}
Let~$I$ be the sum of edge-interferences for the current coloring~$c$, i.e., \[I=\sum_{uv\in E}W(c(u),c(v)).\]
(Note that this is actually half of the sum of interferences at every vertex.) Observe that any coloring achieving the minimum for this sum of edge-interferences will be a $W$-stable coloring of~$G$: If there is a current interference~$I_v^{c(v)}$ which is greater than a potential interference~$I_v^{i}$, then recoloring would lead to a smaller sum~$I$ of edge-interferences.
Furthermore, we can prove that such a stable coloring can be found by a random greedy algorithm: Start with a random coloring of the graph~$G=(V,E)$ and, if there exist a vertex $v\in V$ and a color~$j\in S$ such that the potential interference~$I_v^{j}(G,W,c)$ for that color is smaller than the current interference~$I_v^{c(v)}(G,W,c)$, then recolor the vertex~$v$ with color~$j$ and repeat.

Each step of the procedure increases~$I$ by $I_v^{j}(G,W,c)$ and decreases~$I$ by $I_v^{c(v)}(G,W,c)$ (in both cases, $c$ is the initial coloring before the step). Hence, the sum of edge-interferences~$I$ decreases by a positive amount $I_v^{c(v)}(G,W,c)-I_v^{j}(G,W,c)$ and the random greedy algorithm finishes with a $W$-stable coloring of~$G$.
\end{proof}

Before proving our bounds, we also need the following lemma:
\begin{lemma}
\label{lemma:stable}
Any $W$-stable $k$-coloring~$c$ of a graph~$G$ fulfills that, for each vertex~$v$,
\[
k\ \!I_v^{c(v)}(G,W,c)\leq
\operatorname{deg}(v)\ \!||W||_{\infty},
\]
where $\displaystyle ||W||_{\infty}=\max_{i}\sum_{j} W_{ij}$ is the natural norm of the matrix~$W$.
\end{lemma}
\begin{proof}
First, observe that such a coloring fulfills
\[
k\ \!I_v^{c(v)}(G,W,c)\leq
\sum_{j=1}^k I_v^{j}(G,W,c),
\]
because $c$ being stable implies that the interference $I_v^{c(v)}(G,W,c)$ is actually not greater than all of the potential interferences $I_v^{j}(G,W,c)$. Restating the sum of the potential interferences at~$v$ as the sum of interferences contributed by neighbors of each color while varying the color of~$v$, the right-hand side in the previous equation equals
\[
\sum_{u\in N(v) | c(u)=1} \sum_{j} W_{1j}\ +\ \cdots\ +\ \sum_{u\in N(v) | c(u)=k} \sum_{j} W_{kj}
\]
and this sum is not greater than the number of neighbors of~$v$ times the maximum of the sums $\sum_{j} W_{ij}$, i.e., not greater than $\operatorname{deg}(v)\ \!||W||_{\infty}$.
\end{proof}

Now we are ready to prove an upper bound for the aim of the TSC problem:

\begin{theorem}
\label{theorem:TSCbound}
Given a graph~$G$ and a spectrum~$S$ of size $|S|$ endowed with a matrix~$W$ of interferences, for any fixed natural number~$2\leq k \leq |S|$,
the following bound holds for the minimum $k$-chromatic threshold~$T_k(G,W)$:
\[
T_k(G,W)\leq \frac{\Delta(G)\ ||W||_{\infty}}{k},
\]
where $\Delta(G)$ is the maximum vertex-degree in the graph~$G$.
Furthermore, this bound is tight.
\end{theorem}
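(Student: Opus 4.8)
\medskip
\noindent\textbf{Proof proposal.} The plan is to separate the inequality from the tightness claim. For the inequality I would deduce everything from Proposition~\ref{proposition:balancedColoring} and Lemma~\ref{lemma:balanced}. First reduce to the case $|S|=k$: if $|S|>k$, restrict the spectrum to any $k$ of its colors and $W$ to the corresponding $k\times k$ submatrix $W'$; since each row sum of $W'$ is a partial sum of a row sum of $W$, we have $||W'||_\infty\le||W||_\infty$, and a $k$-coloring respecting the threshold $\Delta(G)\,||W'||_\infty/k$ a fortiori respects $\Delta(G)\,||W||_\infty/k$. So assume $|S|=k$. By Proposition~\ref{proposition:balancedColoring} pick a $W$-balanced $k$-coloring $c$ of $G$; by Lemma~\ref{lemma:balanced}, for every vertex $v$,
\[
I_v(G,W,c)=I_v^{c(v)}(G,W,c)\le\frac{\operatorname{deg}(v)\,||W||_\infty}{k}\le\frac{\Delta(G)\,||W||_\infty}{k}.
\]
Hence this single coloring $c$ already keeps the interference at every vertex below $\Delta(G)\,||W||_\infty/k$, so $T_k(G,W)\le\Delta(G)\,||W||_\infty/k$.

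For the tightness part, the plan is to exhibit an explicit family attaining equality. I would take $G=K_{k+1}$, so $\Delta(G)=k$, together with a spectrum of exactly $k$ colors and the matrix $W$ having $W_{ii}=1$ on the diagonal and a common off-diagonal value $W_{ij}=a$ for a fixed $a\in(0,1]$ (the all-ones matrix being the case $a=1$). Every row of $W$ sums to $1+(k-1)a$, so $||W||_\infty=1+(k-1)a$ and the claimed bound reads $\Delta(G)\,||W||_\infty/k=1+(k-1)a$. To show this value is actually reached, note that a $k$-coloring of $K_{k+1}$ partitions its $k+1$ vertices into $k$ color classes, so some class has size at least $2$; a vertex lying in a class of size $m$ has interference $(m-1)\cdot 1+(k+1-m)\cdot a$, which is non-decreasing in $m$ because $a\le 1$ and equals $1+(k-1)a$ precisely when $m=2$. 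Distributing the vertices as evenly as possible (one class of size $2$, the remaining $k-1$ of size $1$) leaves the largest class of size exactly $2$, and one checks a vertex in a singleton class has the smaller interference $ka\le 1+(k-1)a$; hence the minimum over colorings of the maximum vertex interference equals $1+(k-1)a$, i.e. $T_k(K_{k+1},W)=\Delta(G)\,||W||_\infty/k$.

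The inequality itself needs no ingenuity once the balanced-coloring machinery is in place; the only care required there is the bookkeeping of passing to a $k$-color sub-palette when $|S|>k$ and noting $||W'||_\infty\le||W||_\infty$. The part that takes a little thought — and which I expect to be the main (though still modest) obstacle — is choosing the tightness instance: it must be regular (so that $\operatorname{deg}(v)=\Delta(G)$ at the worst vertex), have a matrix whose rows all sum to the same value (so that $||W||_\infty$ is that common sum), and force every optimal coloring to be essentially equitable; one then has to supply the matching lower bound $T_k\ge\Delta(G)\,||W||_\infty/k$ for that instance, which for $K_{k+1}$ with the circulant $W$ above is the elementary pigeonhole/class-size computation sketched in the previous paragraph.
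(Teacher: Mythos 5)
Your proof of the inequality follows essentially the same route as the paper: take a $W$-balanced $k$-coloring from Proposition~\ref{proposition:balancedColoring} and apply Lemma~\ref{lemma:balanced} at a vertex of maximum interference. Your explicit reduction to a $k$-color sub-palette, with the observation that $||W'||_{\infty}\leq ||W||_{\infty}$ for the $k\times k$ submatrix, is bookkeeping the paper leaves implicit, and it is correct. Where you genuinely diverge is the tightness witness. The paper uses the single instance $k=2$, $W=I$, $G$ an odd cycle: any $2$-coloring of an odd cycle has a monochromatic edge, so the best achievable threshold is $1=\Delta(G)\,||W||_{\infty}/2$. You instead take $G=K_{k+1}$ with $W_{ii}=1$ and $W_{ij}=a\in(0,1]$ off the diagonal; your computation is correct --- some color class has size at least $2$ by pigeonhole, the interference $(m-1)+(k+1-m)a$ of a vertex in a class of size $m$ is non-decreasing in $m$ since $a\leq 1$, and the near-equitable coloring attains the value $1+(k-1)a$ exactly --- so $T_k(K_{k+1},W)=1+(k-1)a=\Delta(G)\,||W||_{\infty}/k$. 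This buys strictly more than the paper's example: tightness for every $k\geq 2$ and for a one-parameter family of matrices, rather than for one pair $(k,W)$, at the cost of a slightly longer matching-lower-bound argument; the paper's odd-cycle example is shorter but certifies only the claim as literally stated.
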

\begin{proof}
By definition of~$T_k(G,W)$, in order to prove the bound it is enough to prove that there exists a coloring~$c_0$ of~$G$ with~$k$ colors for which the interference at every vertex does not exceed the threshold~$t_0=\frac{\Delta(G)\ ||W||_{\infty}}{k}$.
Because of Proposition~\ref{proposition:stableColoring} and $k\geq 2$, we know that there exists a $W$-stable coloring~$c_0$ of~$G$ using~$k$ colors. It just remains to prove the interference condition:
For every vertex~$v$ with maximum interference we have
\[
k\ \! T_k(G,W)\leq
k\ \!I_v^{c_0(v)}(G,W,c_0) \leq
\operatorname{deg}(v)\ \!||W||_{\infty},
\]
where the leftmost inequality follows from $T_k(G,W)$ being minimum and the rightmost inequality comes from Lemma~\ref{lemma:stable}. The bound in the statement follows.

For the final claim in the statement, it is enough to note that taking $k=2$ and $W=I$ our bound results in $T_2(G,W)\leq \left\lceil\frac{\Delta(G)}{2}\right\rceil$, which is tight for the graph~$G$ being a cycle of odd length. In that case it is not possible to avoid monochromatic edges, with the best coloring achieving a single monochromatic edge and, therefore, a smallest threshold of~$1$, which matches the bound $\left\lceil\frac{\Delta(G)}{2}\right\rceil=\left\lceil\frac{2}{2}\right\rceil=1$.
\end{proof}

Next, we provide an upper bound for the aim of the CSC problem. The bound uses the generalization of the greatest common divisor (gcd) to non-integer numbers, which is defined analogously just considering that a number~$x$ divides a number~$y$ if the fraction~$y/x$ is an integer.

\begin{theorem}
\label{theorem:CSCbound}
Given a graph~$G$ and a spectrum~$S$ having size $|S|\geq 2$ and endowed with a matrix~$W$ of interferences, for any fixed threshold~$t$ being a multiple of~$\gcd(W)$ and such that
$|S| t \geq \Delta(G)\ ||W||_{\infty}$
the following bound holds for the $t$-interference chromatic number~$\chi_t(G,W)$:
\[
\chi_t(G,W)\leq \left\lceil\frac{\Delta(G)\ ||W||_{\infty}+\gcd(W)}{t+\gcd(W)}\right\rceil.
\]
If $t$ is not a multiple of~$\gcd(W)$, the bound can be expressed as
\[
\chi_t(G,W)\leq \left\lceil\frac{\Delta(G)\ ||W||_{\infty}+\gcd(W)}{\gcd(W)\left\lfloor\frac{t}{\gcd(W)}\right\rfloor+\gcd(W)}\right\rceil,
\]
where~$t$ is replaced by the nearest multiple of~$\gcd(W)$ below~$t$.
Furthermore, these bounds are tight.
\end{theorem}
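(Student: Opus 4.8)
The plan is to reduce everything to the $W$-balanced colorings provided by Proposition~\ref{proposition:balancedColoring} and to squeeze one extra unit out of the estimate in Lemma~\ref{lemma:balanced} using the fact that every vertex interference is an integer multiple of $\gcd(W)$. Write $M=\Delta(G)\,||W||_{\infty}$ and $g=\gcd(W)$, and set $k=\left\lceil\frac{M+g}{t+g}\right\rceil$. The first step is to check that this $k$ is admissible: the hypothesis $t\geq\frac{M-g(|S|-1)}{|S|}$ rearranges to $|S|(t+g)\geq M+g$, hence $\frac{M+g}{t+g}\leq|S|$ and so $k\leq|S|$, i.e.\ the spectrum really has at least $k$ colors. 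The borderline case $k=1$ occurs only when $M\leq t$, and then coloring $G$ monochromatically with any color $c_i$ yields interference $\deg(v)\,W_{ii}\leq\Delta(G)\,||W||_{\infty}=M\leq t$ at every vertex (using $W_{ii}\leq||W||_{\infty}$), so $\chi_t(G,W)\leq 1=k$ in that degenerate case.

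For $k\geq 2$, Proposition~\ref{proposition:balancedColoring} applies (since $2\leq k\leq|S|$) and gives a $W$-balanced $k$-coloring $c_0$, while Lemma~\ref{lemma:balanced} gives $k\,I_v^{c_0(v)}(G,W,c_0)\leq\deg(v)\,||W||_{\infty}\leq M$, so $I_v^{c_0(v)}(G,W,c_0)\leq M/k$ for every vertex $v$. Now the crucial observation: $I_v^{c_0(v)}(G,W,c_0)$ is a sum of entries of $W$, hence a non-negative integer multiple of $g$, so it is at most the largest multiple of $g$ not exceeding $M/k$, namely $g\lfloor\frac{M}{kg}\rfloor$. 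Since $k\geq\frac{M+g}{t+g}$ yields $k(t+g)\geq M+g>M$, and since $t/g$ is a non-negative integer, we get $\frac{M}{kg}<\frac{t+g}{g}=\frac{t}{g}+1$, whence $\lfloor\frac{M}{kg}\rfloor\leq\frac{t}{g}$ and therefore $I_v^{c_0(v)}(G,W,c_0)\leq t$ for all $v$. This exhibits a coloring of $G$ respecting the threshold $t$ and using at most $k$ colors, so $\chi_t(G,W)\leq k$, which is the claimed bound. (Note this is sharper than what one would get by directly inverting Theorem~\ref{theorem:TSCbound}, which yields only $\lceil M/t\rceil$; the gain comes precisely from the divisibility of interferences by $g$.)

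If $t$ is not a multiple of $g$, I would observe that since every vertex interference is a multiple of $g$, the constraint ``interference $\leq t$'' is equivalent to ``interference $\leq g\lfloor t/g\rfloor$'', so that $\chi_t(G,W)=\chi_{g\lfloor t/g\rfloor}(G,W)$; the second formula in the statement is then exactly the first one applied with the admissible threshold $g\lfloor t/g\rfloor$.

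For the tightness claim I would use a complete graph together with a (positive multiple of the) identity matrix. Taking $W=I$ we have $g=1$ and $||W||_{\infty}=1$, and for $G=K_n$ with $|S|=n=|V(G)|$ the hypothesis reduces to $t\geq 0$; since in $K_n$ a color class of size $m$ forces interference $m-1$ at each of its vertices, a coloring respecting threshold $t$ must use classes of size at most $t+1$, so the minimum number of colors is exactly $\lceil n/(t+1)\rceil=\left\lceil\frac{\Delta(K_n)\,||I||_{\infty}+\gcd(I)}{t+\gcd(I)}\right\rceil$, matching the bound; taking instead $W=2I$ and $t$ odd on $K_n$, the same counting gives $\lceil n/(\lfloor t/2\rfloor+1)\rceil$, which matches the second formula. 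I expect the only delicate point to be the floor/ceiling bookkeeping — verifying that $\left\lceil\frac{M+g}{t+g}\right\rceil$ is precisely the least $k$ for which Lemma~\ref{lemma:balanced}, sharpened by divisibility by $g$, already forces $I_v^{c_0(v)}\leq t$, and recognizing the stated hypothesis on $t$ as the condition $k\leq|S|$ in disguise.
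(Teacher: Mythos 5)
Your proof of the bound itself is correct and follows essentially the same route as the paper: take the $W$-balanced $k_0$-coloring from Proposition~\ref{proposition:balancedColoring}, apply Lemma~\ref{lemma:balanced}, and use the fact that vertex interferences are multiples of $\gcd(W)$ to gain the extra $\gcd(W)$ in the denominator. The paper phrases this as a proof by contradiction (an interference above $t$ must be at least $t+\gcd(W)$, which combined with the lemma forces $\Delta(G)\,||W||_{\infty}\geq\Delta(G)\,||W||_{\infty}+\gcd(W)$), whereas you argue directly via $\bigl\lfloor M/(kg)\bigr\rfloor\leq t/g$; these are equivalent. You are somewhat more careful than the paper on the degenerate case $k=1$, which you resolve explicitly with a monochromatic coloring and $W_{ii}\leq||W||_{\infty}$, and your reduction of the non-multiple case to the threshold $g\lfloor t/g\rfloor$ matches the paper's (terser) remark. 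The one genuine divergence is the tightness claim: the paper takes $W=I$ and $t=0$, recovering Brooks' bound $\chi(G)\leq\Delta(G)+1$ and citing odd cycles, while you use complete graphs $K_n$ with $W=I$ (resp.\ $W=2I$ and odd $t$) and arbitrary thresholds, where the exact value $\chi_t(K_n,I)=\lceil n/(t+1)\rceil$ matches the bound. Your construction is correct, works for every admissible $t$ rather than only $t=0$, and is the only one of the two that also certifies tightness of the second (non-multiple) formula, so it is arguably the stronger justification of the final claim.
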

\begin{proof}
The number of colors has to be at least one and the bound is trivially satisfied for $\chi_t(G,W)=1$, so let us focus on the case $\chi_t(G,W)>1$. We start with the case of~$t$ being a multiple of~$\gcd(W)$.

By definition of~$\chi_t(G,W)$, in order to prove the bound it is enough to prove that there exists a coloring~$c_0$ of~$G$ with $k_0=\left\lceil\frac{\Delta(G)\ ||W||_{\infty}+\gcd(W)}{t+\gcd(W)}\right\rceil$ colors for which the interference at every vertex does not exceed the threshold~$t$, i.e., $I_v(G,W,c_0)\leq t,\ \forall v$. Note that the condition on the threshold in the statement ensures that $k_0\leq |S|$.

Because of Proposition~\ref{proposition:stableColoring} and $k_0\geq 2$, we know that there exists a $W$-stable coloring~$c_0$ of~$G$ using~$k_0$ colors. Hence, it just remains to prove the interference condition.

By contradiction, suppose that for the coloring~$c_0$ there is a vertex~$v$ in~$G$ with an interference above the threshold~$t$, i.e., \[
I_v^{c_0(v)}(G,W,c_0)> t
\]
which, because of~$t$ and the interferences around a vertex being multiples of~$\gcd(W)$, implies
\[
I_v^{c_0(v)}(G,W,c_0) \geq t+\gcd(W).
\]
Then, since the coloring~$c_0$ is stable, Lemma~\ref{lemma:stable} leads to
\[
\Delta(G)\ \!||W||_{\infty} \geq
\operatorname{deg}(v)\ \!||W||_{\infty} \geq
k_0\ \!I_v^{c_0(v)}(G,W,c_0) \geq
k_0\ \!(t+\gcd(W)),
\]
where the leftmost inequality follows from the definition of~$\Delta(G)$.

But then our choice of~$k_0$ implies that
\[
\Delta(G)\ \!||W||_{\infty} \geq
\Delta(G)\ ||W||_{\infty}+\gcd(W),
\]
which is a contradiction.

The same arguments work for the case of~$t$ not being a multiple of~$\gcd(W)$, taking into account that interferences around a vertex have to be multiples of~$\gcd(W)$. For the final claim in the statement, it is enough to take $W=I$ and $t=0$ (which is possible as long as $|S|\geq \Delta(G)+1$, i.e., the coloring is not forced to be improper). Thus, our bound coincides with Brooks' bound~\cite{brooks-maxdegree-41} for the chromatic number $\chi_0(G,I)=\chi(G)\leq \Delta(G)+1$, which is tight for the graph~$G$ being a cycle of odd length.
\end{proof}

The reader can check that the case analyzed in Subsection~\ref{subsec:CaseStudy} does fulfill the bounds in Theorems~\ref{theorem:TSCbound} and~\ref{theorem:CSCbound}. Respectively:
\[
1=T_3(PG,W_{2ed})\leq\frac{\Delta(PG)\ ||W_{2ed}||_{\infty}}{3}=
\frac{3\cdot 2}{3}=2
\]
and
\[
3=\chi_1(PG,W_{2ed})\leq
\left\lceil\frac{\Delta(PG)\||W_{2ed}||_{\infty}+\gcd(W_{2ed})}{1+\gcd(W_{2ed})}\right\rceil=
\left\lceil\frac{3\cdot \frac{9}{4}+\frac{1}{8}}{1+\frac{1}{8}}\right\rceil=7.
\]
Note that, in this case, the small size of the graph makes the upper bound~$7$ exceed the actual size of the spectrum~$|S|=|V(G)|=4$, because the chosen threshold $t=1$ does not fulfill the condition in the statement of Theorem~\ref{theorem:CSCbound}:
\[
4 = 4\cdot 1 = |S|\cdot t <
\Delta(PG)\cdot ||W_{2ed}||_{\infty}
=
3\cdot \frac{9}{4} =
\frac{27}{4} = 6.75.
\]

\begin{observation}
\label{obs:ExtendSpectrum}
The condition on the threshold~$t$ in the statement of Theorem~\ref{theorem:CSCbound} reflects the fact that a trade-off is needed between the threshold~$t$ and the size~$|S|$ of the spectrum, since decreasing one of them might need an increase in the other. Furthermore, the theorem still has useful practical implications for values of the threshold~$t$ not fulfilling that condition: If the user was allowed to extend the spectrum~$S$ in such a way that (i) the bound given in the theorem is now below the size~$|S|$ and (ii) neither~$||W||_{\infty}$ nor~$\gcd(W)$ are increased, then the threshold~$t$ would fulfill the condition. Thus, the user would get not only an upper bound for the $t$-interference chromatic number, but also a certificate that for such a~$t$ the problem would be solvable for the extended spectrum. Finally, let us mention that the condition can be further refined to $|S| t \geq \Delta(G)\ ||W||_{\infty}-\gcd(W)(|S|-1)$
\end{observation}

It is interesting to note that, as long as the size of the spectrum is large enough for the chosen threshold, our bounds do not depend on the number of vertices in the graph~$G$, but only on its maximum degree~$\Delta(G)$. For the sake of applications to Wi-Fi channel assignment, this guarantees the scalability of the bounds for a growing number of APs as long as the maximum number of interfering APs can be restrained. Such a property has interesting implications in the design and planning of wireless networks infrastructures, since the transmitting power of the APs could be adjusted to guarantee that the degree of the graph stays below the necessary value, and therefore, to satisfy the maximum interference threshold for a given performance requirement.

\section{Experimental results for random graphs}
\label{results}

The well-known \textsc{Vertex Coloring} (VC) problem, which aims for a proper coloring (with no monochromatic edges), is a particular case of our \textsc{Chromatic Spectrum Coloring} problem, using the identity~$I$ as matrix of weights and a threshold $t=0$. Therefore, the usual chromatic number is the $0$-interference chromatic number of~$(G,I)$, that is, \[\chi(G)=\chi_0(G,I).\]
With the same identity matrix and a threshold~$t$, we get the \textsc{Improper Coloring} problem, which allows at most~$t$ monochromatic edges around a vertex. These two particular cases are NP-hard~\cite{karp-NPhard-72,woodall-improper-90}, hence so is our \textsc{Chromatic Spectrum Coloring} problem.
In addition, fixed any $k\geq 2$ it is NP-complete to decide if there exists a $k$-coloring with interference at most any threshold~$t\geq 2$, see~\cite{abghmm-WeightedImproper-12} and the references therein. This implies that our \textsc{Threshold Spectrum Coloring} is also NP-hard.

The hardness of the VC problem has given rise to many heuristics and metaheuristics for different vertex-coloring problems, e.g.,~\cite{bk-rchgc-2003,ehh-gcaea-98,s-newDSATUR-12}. With the TSC and CSC problems introduced in this paper being even more general, this section is devoted to experimentally test several techniques, in a range of scenarios, and to compare their performance for the TSC and CSC problems.

As a bonus, these experiments will allow to check how tight or loose are, for average scenarios, the theoretical upper bounds given in Theorems~\ref{theorem:TSCbound} and~\ref{theorem:CSCbound}.

\subsection{Settings}

Although we are aware of the number of benchmarks available like, e.g., those in DIMACS~\cite{DIMACS1996}, the fact that this is a seminal work, introducing a new type of coloring from both theoretical and applied perspectives, has led us to start the experiments by considering the widely used Erd\H{o}s-Renyi (ER) random graphs~\cite{newman02}. In these graphs, every pair of vertices
has a prescribed probability~$p$ of being connected. We have created different categories of graphs, by varying the number of vertices $n$ and the probability of connection $p$. In particular, we generated 10 graphs for each combination of the same settings used in \cite{mns-DSATURbased-15,s-newDSATUR-12}, that is, number of vertices $n\in\{60,70,80\}$ and  probability of connection $p \in \{0.1, 0.3, 0.5, 0.7, 0.9\}$, performing 20 repetitions of each experiment per graph.

For the matrix of interferences between colors, we have considered the exponential decay of base~$2$ already used in Subsection~\ref{subsec:CaseStudy}, $W_{ij}={1}/{2^{|i-j|}}$. Other choices have also been explored, leading to similar results which are not included here in order to avoid an excessive length of the paper.

For the TSC problem we have considered $k\in\{4,6,11\}$ as values for the number of colors in the spectrum, while for the CSC problem we have considered thresholds $t\in\{np/4,np/2,3np/4\}$, being~$np$ the expected average degree of the graph. These choices have a twofold interest: On one hand, because of including the number $k=11$ of channels available in the Wi-Fi problem, so that our results show interference thresholds that can be achieved at every vertex. On the other hand, because the choices for~$k$ turn out to be similar to the best number of colors obtained for the chosen values of~$t$ and vice versa.

\subsection{Coloring techniques used for benchmarking\label{sec:techniques}}

For an experimental analysis of our two problems, we propose heuristics inspired by the sequential greedy algorithm DSATUR~\cite{b-DSATUR-79} used for the \textsc{Vertex Coloring} problem.
Our implementation {TSC-DSATUR}
for the \textsc{Threshold Spectrum Coloring} problem looks for the color minimizing the interference at~$v$, as shown in Algorithm \ref{alg:TSC-DSATUR}. The algorithm starts with an undefined coloring~(1) and iterates by selecting at each iteration the uncolored vertex with the highest saturation degree, that is, the one with more already-colored neighbors, the one with highest degree in the case of a tie, or a random vertex among the highest-degree vertices in the case of a double tie~(2). Once a vertex $v$ has been selected, a color is assigned from the available set so that the interference at vertex~$v$ is minimized~(3).

\begin{algorithm2e} \caption{TSC-DSATUR coloring algorithm} \label{alg:TSC-DSATUR} 
\KwIn{\\
\Indp $G=(V,E)$: graph to be colored;
$S=\{c_i\}$: spectrum of colors\\
$W$: matrix of interferences\\
$k \mid 2\leq k\leq |S|$: maximum color number to be used from the spectrum\\
}
\KwOut{\\
\Indp $c$: $k$-coloring of the graph $G$}
$c(v)\coloneqq\varnothing,\ \forall v \in V$; \\
\lnl{InRes1}\While{$\exists v \in V \mid c(v)=\varnothing$ }{
\lnl{InRes2}$v=\argmax_{x \in V; c(x)=\varnothing} saturation\_degree(x)$;\\
\lnl{InRes3}$c(v)\coloneqq \argmin_{c_i \mid i\leq k}{\sum_{u\in N(v);c(u)\neq\varnothing}W(c(u),c_i)}$ \\
}
\end{algorithm2e}

Our implementation {CSC-DSATUR}
for the \textsc{Chromatic Spectrum Coloring} problem is shown in Algorithm \ref{alg:CSC-DSATUR}. The algorithm looks for a color that does not make interference at~$v$ exceed the product of the fixed threshold~$t$ by the proportion of neighbors already colored~(1). That is, if vertex~$v$ has five neighbours, but only three of them have been colored so far, the color chosen should guarantee the interference at~$v$ to be at most~$\frac{3}{5}t$. For a color to be chosen, this constraint must hold not only for~$v$, but also for any neighbor of~$v$ which has already been colored~(2). This is a conservative approach, to ensure that it is not possible to have a solution where the maximum interference per vertex is above the threshold.

\begin{algorithm2e} \caption{CSC-DSATUR coloring algorithm} \label{alg:CSC-DSATUR} 
\KwIn{\\
\Indp $G=(V,E)$: graph to be colored;
$S=\{c_i\}$: spectrum of colors\\
$W$: matrix of interferences\\
$t$: threshold on the maximum interference per vertex\\
}

\KwOut{\\
\Indp $c$: coloring of the graph $G$, $c(v)\coloneqq\varnothing,\ \forall v \in V$ if no valid coloring is found }
$c(v)\coloneqq\varnothing,\ \forall v \in V$; \\
\While{$\exists v \in V \mid c(v)=\varnothing$ }{
$v=\argmax_{x \in V; c(x)=\varnothing} saturation\_degree(x)$\\
$i=1;I_{max}\coloneqq \infty$\\
\lnl{InRes1}\While{$I_{max}>
\frac{|\{v\in N(v) \mid c(v)\neq\varnothing\}|}{|N(v)|}t;i\leq |S|$}{
$c(v)\coloneqq c_i$\\
$I_{max}=\sum_{u\in N(v);c(u)\neq\varnothing}W(c(u),c_i)$\\
\If{$I_{max}\leq
\frac{|\{v\in N(v) \mid c(v)\neq\varnothing\}|}{|N(v)|}t$}{
\lnl{InRes2}\ForEach{$u\in N(v) \mid c(u)\neq\varnothing$}{
   $I_{max}=\sum_{w\in N(u);c(w)\neq\varnothing}{W(c(w),c_i)}$\\
   \If{$I_{max}>
   \frac{|\{v\in N(u) \mid c(v)\neq\varnothing\}|}{|N(u)|}t$}{
   \textbf{break}
   }
}
}
}
\If{$i>|S|$}{
   $c(v)\coloneqq\varnothing,\ \forall v \in |V|$;\\
   \textbf{break}}
}

\end{algorithm2e}

For comparison with our heuristics, we have tested a generic nonlinear optimizer based in Particle Swarm Optimization (PSO)~\cite{Zhang15}.
For the TSC problem, we have used the \textit{sum} of the interferences per vertex as the objective function, since our experiments proved that PSO performed significantly worse when using the maximum interference per vertex.
For the CSC problem, we have used the same PSO optimizer as for TSC in an iterative manner, progressively increasing the number of colors $k$ from one to $|V|$, until we get a solution whose maximum interference per vertex is below the threshold~$t$.

Moreover, as a baseline reference, we have evaluated a \textit{random} coloring approach, which simply selects the color for each vertex from a uniform distribution on the color set. As in the case of PSO, for the CSC problem this reference has been run in an iterative manner with an increasing number of colors $k$ until a valid solution has been found.

\subsection{Results}

Recall that the bound in Theorem~\ref{theorem:TSCbound} was tight, since for some graphs it cannot be improved. However, this does not mean that tightness is achieved for every graph. Hence, we start our experimental results by showing, in Table~\ref{tab:results_bounds_TSC}, the theoretical bound given in Theorem~\ref{theorem:TSCbound} and the gap between the best value obtained by the different techniques under study, Random, TSC-DSATUR and PSO, expressed as a percentage of the bound. (Note that the gap is computed with real values and rounded afterwards.) It can be observed that the more complex is the graph, the tighter is the bound in Theorem~\ref{theorem:TSCbound}.

\begin{table}[!htb]
\caption{Bounds for the maximum vertex interference $T_k(G,W)$ for TSC. Each row shows a specific combination of the number of vertices $n$ and the probability of connection~$p$.}
\label{tab:results_bounds_TSC}
\renewcommand{\tabcolsep}{9pt}
\renewcommand{\arraystretch}{1.1}
\centering
\footnotesize
\begin{tabular}{|cc|cc|cc|cc|}
\cline{3-8}
\multicolumn{1}{c}{} & \multicolumn{1}{c|}{} & \multicolumn{2}{c}{k=4} & \multicolumn{2}{|c}{k=6} & \multicolumn{2}{|c|}{k=11}\\
\multicolumn{1}{c}{$n$} & \multicolumn{1}{c|}{$p$} & \multicolumn{1}{c}{Bound} & \multicolumn{1}{c|}{Gap (\%)} & \multicolumn{1}{c}{Bound} & \multicolumn{1}{c|}{Gap (\%)} & \multicolumn{1}{c}{Bound} & \multicolumn{1}{c|}{Gap (\%)}\\
\hline
\multirow{5}{*}{60} & 0.1 & 6.7 & 39.3 & 5.2 & 58.1 & 3.2 & 75.7\\
 & 0.3 & 14.9 & 26.8 & 11.6 & 35.0 & 7.1 & 47.0\\
 & 0.5 & 21.0 & 15.2 & 16.4 & 22.5 & 10.0 & 35.7\\
 & 0.7 & 27.3 & 14.4 & 21.2 & 19.3 & 13.0 & 25.5\\
 & 0.9 & 32.5 & 12.0 & 25.2 & 15.5 & 15.4 & 15.6\\
\hline
\multirow{5}{*}{70} & 0.1 & 7.4 & 34.7 & 5.7 & 48.5 & 3.5 & 72.5\\
 & 0.3 & 17.1 & 23.5 & 13.3 & 33.6 & 8.1 & 44.6\\
 & 0.5 & 25.1 & 18.0 & 19.6 & 22.1 & 11.9 & 33.2\\
 & 0.7 & 32.0 & 14.4 & 24.9 & 20.6 & 15.2 & 22.6\\
 & 0.9 & 37.7 & 11.6 & 29.3 & 16.2 & 17.9 & 15.0\\
\hline
\multirow{5}{*}{80} & 0.1 & 8.3 & 31.2 & 6.5 & 46.5 & 4.0 & 69.5\\
 & 0.3 & 19.1 & 19.8 & 14.8 & 28.6 & 9.1 & 39.5\\
 & 0.5 & 28.6 & 15.5 & 22.2 & 22.9 & 13.6 & 29.0\\
 & 0.7 & 35.8 & 14.4 & 27.8 & 18.4 & 17.0 & 21.7\\
 & 0.9 & 43.4 & 12.3 & 33.7 & 14.7 & 20.6 & 15.5\\
\hline
\end{tabular}
\end{table}

\begin{table}[!htb]

\caption{Maximum vertex interference $T_k(G,W)$ and running times for TSC with $k=4$.}
\label{tab:results_TSC4}
\renewcommand{\tabcolsep}{9pt}
\renewcommand{\arraystretch}{1.1}
\centering
\footnotesize
\begin{tabular}{|cc|cc|ccc|ccc|}
\cline{3-10}
\multicolumn{1}{c}{} & \multicolumn{1}{c|}{} & \multicolumn{2}{c}{Random} & \multicolumn{3}{|c}{TSC-DSATUR} & \multicolumn{3}{|c|}{PSO} \\
\multicolumn{1}{c}{} & \multicolumn{1}{c|}{} & \multicolumn{2}{c}{$T_k(G,W)$} & \multicolumn{2}{|c}{$T_k(G,W)$} & \multicolumn{1}{c}{Time} & \multicolumn{2}{|c}{$T_k(G,W)$} & \multicolumn{1}{c|}{Time}\\
\multicolumn{1}{c}{$n$} & \multicolumn{1}{c|}{$p$} & \multicolumn{1}{c}{avg} & \multicolumn{1}{c|}{std} & \multicolumn{1}{c}{avg} & \multicolumn{1}{c}{std} & \multicolumn{1}{c|}{avg} & \multicolumn{1}{c}{avg} & \multicolumn{1}{c}{std} & \multicolumn{1}{c|}{avg} \\
\hline
\multirow{5}{*}{60}
 & 0.1 & 6.9 & 0.7 & \textbf{4.1} & 0.5 & 2.6 ms & 5.2 & 0.5 & 3.6 s\\
 & 0.3 & 14.9 & 0.7 & \textbf{10.9} & 0.8 & 3.2 ms & 12.5 & 0.5 & 5.1 s\\
 & 0.5 & 21.6 & 1.0 & \textbf{17.8} & 1.3 & 5.5 ms & 18.3 & 0.8 & 6.2 s\\
 & 0.7 & 27.7 & 0.5 & \textbf{23.4} & 0.6 & 7.6 ms & 23.9 & 0.6 & 7.6 s\\
 & 0.9 & 32.8 & 0.5 & 28.8 & 0.5 & 9.0 ms & \textbf{28.6} & 0.1 & 8.6 s\\
\hline
\multirow{5}{*}{70}
 & 0.1 & 7.7 & 0.5 & \textbf{4.8} & 0.6 & 2.9 ms& 6.0 & 0.5 & 4.5 s\\
 & 0.3 & 17.2 & 0.6 & \textbf{13.1} & 0.9 & 4.5 ms & 14.6 & 0.6 & 6.5 s\\
 & 0.5 & 25.1 & 0.7 & \textbf{20.6} & 1.4 & 8.0 ms & 21.8 & 0.6 & 8.2 s\\
 & 0.7 & 32.2 & 0.6 & \textbf{27.4} & 0.9 & 10.4 ms & 28.2 & 0.4 & 9.9 s\\
 & 0.9 & 38.3 & 0.6 & 33.5 & 0.8 & 12.1 ms & \textbf{33.3} & 0.2 & 11.4 s\\
\hline
\multirow{5}{*}{80}
 & 0.1 & 8.5 & 0.6 & \textbf{5.7} & 0.9 & 3.6 ms & 6.8 & 0.5 & 5.7 s\\
 & 0.3 & 19.2 & 0.6 & \textbf{15.3} & 0.7 & 6.7 ms & 16.7 & 0.5 & 8.3 s\\
 & 0.5 & 28.8 & 0.7 & \textbf{24.1} & 1.5 & 10.8 ms & 25.0 & 0.4 & 11.1 s\\
 & 0.7 & 36.2 & 0.6 & \textbf{30.6} & 0.7 & 13.4 ms & 31.7 & 0.7 & 13.0 s\\
 & 0.9 & 44.1 & 0.6 & \textbf{38.0} & 0.5 & 15.7 ms & 38.3 & 0.3 & 15.6 s\\
\hline
\end{tabular}

\bigskip

\caption{Maximum vertex interference $T_k(G,W)$ and running times for TSC with $k=6$.}
\label{tab:results_TSC6}
\renewcommand{\tabcolsep}{9pt}
\renewcommand{\arraystretch}{1.1}
\centering
\footnotesize
\begin{tabular}{|cc|cc|ccc|ccc|}
\cline{3-10}
\multicolumn{1}{c}{} & \multicolumn{1}{c|}{} & \multicolumn{2}{c}{Random} & \multicolumn{3}{|c}{TSC-DSATUR} & \multicolumn{3}{|c|}{PSO} \\
\multicolumn{1}{c}{} & \multicolumn{1}{c|}{} & \multicolumn{2}{c}{$T_k(G,W)$} & \multicolumn{2}{|c}{$T_k(G,W)$} & \multicolumn{1}{c}{Time} & \multicolumn{2}{|c}{$T_k(G,W)$} & \multicolumn{1}{c|}{Time}\\
\multicolumn{1}{c}{$n$} & \multicolumn{1}{c|}{$p$} & \multicolumn{1}{c}{avg} & \multicolumn{1}{c|}{std} & \multicolumn{1}{c}{avg} & \multicolumn{1}{c}{std} & \multicolumn{1}{c|}{avg} & \multicolumn{1}{c}{avg} & \multicolumn{1}{c}{std} & \multicolumn{1}{c|}{avg} \\
\hline
\multirow{5}{*}{60}
 & 0.1 & 5.6 & 0.5 & \textbf{2.2} & 0.4 & 2.7 ms & 3.5 & 0.3 & 4.2 s\\
 & 0.3 & 12.0 & 0.4 & \textbf{7.5} & 0.6 & 3.5 ms & 9.2 & 0.4 & 5.6 s\\
 & 0.5 & 17.0 & 0.7 & \textbf{12.7} & 0.6 & 5.0 ms & 13.8 & 0.7 & 6.7 s\\
 & 0.7 & 21.8 & 0.6 & \textbf{17.1} & 0.6 & 7.6 ms & 18.0 & 0.5 & 7.7 s\\
 & 0.9 & 25.8 & 0.3 & 21.5 & 0.4 & 10.1 ms & \textbf{21.3} & 0.1 & 8.7 s\\
\hline
\multirow{5}{*}{70}
 & 0.1 & 6.3 & 0.3 & \textbf{3.0} & 0.4 & 3.1 ms & 4.4 & 0.4 & 5.1 s\\
 & 0.3 & 13.6 & 0.5 & \textbf{8.8} & 0.9 & 4.6 ms & 10.9 & 0.5 & 6.9 s\\
 & 0.5 & 20.0 & 0.5 & \textbf{15.2} & 0.7 & 7.3 ms & 16.4 & 0.3 & 8.4 s\\
 & 0.7 & 25.9 & 0.4 & \textbf{19.8} & 0.8 & 10.6 ms & 21.3 & 0.5 & 10.7 s\\
 & 0.9 & 30.1 & 0.4 & \textbf{24.6} & 0.4 & 13.6 ms & 25.1 & 0.3 & 11.5 s\\
\hline
\multirow{5}{*}{80}
 & 0.1 & 6.9 & 0.4 & \textbf{3.5} & 0.3 & 3.8 ms & 5.2 & 0.3 & 6.7 s\\
 & 0.3 & 15.3 & 0.3 & \textbf{10.6} & 1.0 & 5.8 ms & 12.5 & 0.5 & 8.4 s\\
 & 0.5 & 22.8 & 0.4 & \textbf{17.1} & 0.9 & 9.9 ms & 18.8 & 0.5 & 11.4 s\\
 & 0.7 & 28.8 & 0.7 & \textbf{22.7} & 0.8 & 13.9 ms & 24.2 & 0.5 & 12.8 s\\
 & 0.9 & 34.6 & 0.2 & \textbf{28.8} & 0.7 & 17.9 ms & \textbf{28.8} & 0.2 & 15.2 s\\
\hline
\end{tabular}
\end{table}

\begin{table}[!htb]
\caption{Maximum vertex interference $T_k(G,W)$ and running times for TSC with $k=11$.}
\label{tab:results_TSC11}
\renewcommand{\tabcolsep}{9pt}
\renewcommand{\arraystretch}{1.1}
\centering
\footnotesize
\begin{tabular}{|cc|cc|ccc|ccc|}
\cline{3-10}
\multicolumn{1}{c}{} & \multicolumn{1}{c|}{} & \multicolumn{2}{c}{Random} & \multicolumn{3}{|c}{TSC-DSATUR} & \multicolumn{3}{|c|}{PSO} \\
\multicolumn{1}{c}{} & \multicolumn{1}{c|}{} & \multicolumn{2}{c}{$T_k(G,W)$} & \multicolumn{2}{|c}{$T_k(G,W)$} & \multicolumn{1}{c}{Time} & \multicolumn{2}{|c}{$T_k(G,W)$} & \multicolumn{1}{c|}{Time}\\
\multicolumn{1}{c}{$n$} & \multicolumn{1}{c|}{$p$} & \multicolumn{1}{c}{avg} & \multicolumn{1}{c|}{std} & \multicolumn{1}{c}{avg} & \multicolumn{1}{c}{std} & \multicolumn{1}{c|}{avg} & \multicolumn{1}{c}{avg} & \multicolumn{1}{c}{std} & \multicolumn{1}{c|}{avg} \\
\hline
\multirow{5}{*}{60}
 & 0.1 & 4.0 & 0.3 & \textbf{0.8} & 0.1 & 3.2 ms & 1.9 & 0.2 & 5.5 s\\
 & 0.3 & 8.2 & 0.3 & \textbf{3.7} & 0.5 & 4.7 ms & 5.5 & 0.2 & 6.3 s\\
 & 0.5 & 11.6 & 0.4 & \textbf{6.4} & 0.4 & 6.5 ms & 8.4 & 0.4 & 7.6 s\\
 & 0.7 & 14.5 & 0.2 & \textbf{9.7} & 0.5 & 8.8 ms & 11.1 & 0.2 & 8.9 s\\
 & 0.9 & 17.0 & 0.3 & \textbf{13.0} & 0.3 & 13.0 ms & 13.3 & 0.1 & 10.3 s\\
\hline
\multirow{5}{*}{70}
 & 0.1 & 4.6 & 0.2 & \textbf{1.0} & 0.1 & 3.9 ms & 2.5 & 0.1 & 6.8 s\\
 & 0.3 & 9.3 & 0.3 & \textbf{4.5} & 0.3 & 6.1 ms & 6.6 & 0.3 & 8.3 s\\
 & 0.5 & 13.4 & 0.2 & \textbf{8.0} & 0.5 & 8.8 ms & 10.1 & 0.1 & 10.3 s\\
 & 0.7 & 17.0 & 0.4 & \textbf{11.8} & 0.7 & 12.2 ms & 13.2 & 0.2 & 11.9 s\\
 & 0.9 & 19.5 & 0.3 & \textbf{15.2} & 0.3 & 17.6 ms & 15.4 & 0.1 & 13.8 s\\
\hline
\multirow{5}{*}{80}
 & 0.1 & 4.9 & 0.2 & \textbf{1.2} & 0.2 & 4.9 ms & 3.0 & 0.2 & 7.9 s\\
 & 0.3 & 10.4 & 0.3 & \textbf{5.5} & 0.6 & 7.8 ms & 7.8 & 0.3 & 10.4 s\\
 & 0.5 & 15.1 & 0.4 & \textbf{9.6} & 0.7 & 11.4 ms & 11.7 & 0.2 & 12.8 s\\
 & 0.7 & 18.9 & 0.3 & \textbf{13.3} & 0.7 & 15.8 ms & 15.0 & 0.3 & 15.2 s\\
 & 0.9 & 22.5 & 0.4 & \textbf{17.4} & 0.7 & 23.4 ms & 17.7 & 0.1 & 17.3 s\\
\hline
\end{tabular}
\end{table}

On the other hand, Tables~\ref{tab:results_TSC4}, \ref{tab:results_TSC6}, and \ref{tab:results_TSC11} show the results obtained for the \textsc{Threshold Spectrum Coloring} with $k$ equal to 4, 6, and 11 colors, respectively, and for each graph category. In columns, we have represented the average (showing in boldface the best value of each row) and standard deviation for the maximum vertex interference $T_k(G,W)$ achieved using each evaluated strategy: the {random} reference, our implementation of {TSC-DSATUR}, and the nonlinear optimizer {PSO}. In addition, we also show the average time required to compute that interference $T_k(G,W)$.

It is important to observe that the results in Table~\ref{tab:results_TSC11} provide, for the Wi-Fi channel assignment problem in the corresponding graph, an interference threshold which can be ensured to be achievable at every vertex.

As for the algorithms, the results show that the performance in terms of $T_k(G,W)$ of our heuristic TSC-DSATUR is better than the performance of PSO, except for a very limited number of cases. Moreover, in those few cases where PSO is better than TSC-DSATUR, the difference is very small. If we analyze the running times of both techniques, we can observe that TSC-DSATUR runs about 1000 times faster than PSO. From this results we can conclude that the use of the heuristic algorithm TSC-DSATUR is very advisable, as it is able to obtain results that are usually better than those obtained by the nonlinear optimizer PSO but with much lower computation requirements.

If we compare both algorithms in more detail, and in general, we can conclude that the performance of TSC-DSATUR in comparison to PSO decreases as the complexity of the graph increases, i.e., as $p$ increases, being fairly similar both performances when $p=0.9$. This is due to the fact that PSO deals better with complex nonlinear problems. Furthermore, the standard deviations are noticeable lower in PSO than in TSC-DSATUR. As expected, the values obtained for the random algorithm are always much worse than the other approaches. In particular, it is interesting to see that for $k=4$ and $k=6$ the random algorithm achieves results very close to the theoretical bound, while with the highest number of colors that approach is worse than the bound.

It is very interesting to inspect the results also in terms of the expected average degree of the graphs, given by~$np$, as it is shown in Figure~\ref{fig:np_TSC}. In Figures~\ref{fig:np_4}-\ref{fig:np_11} we can identify a linear trend between the achieved maximum interference $T_k(G,W)$ and the expected average degree of the graphs for $k$ equal to 4, 6 and 11 colors, respectively. Finally, in Figure~\ref{fig:np_timesTSC} we show how the quotient between the running times between PSO and TSC-DSATUR decreases as $np$ increases.

\begin{figure}[!htb]
\begin{minipage}{.5\linewidth}
\centering
\subfloat[$k=4$.]{\label{fig:np_4}\includegraphics[width=1\textwidth] {./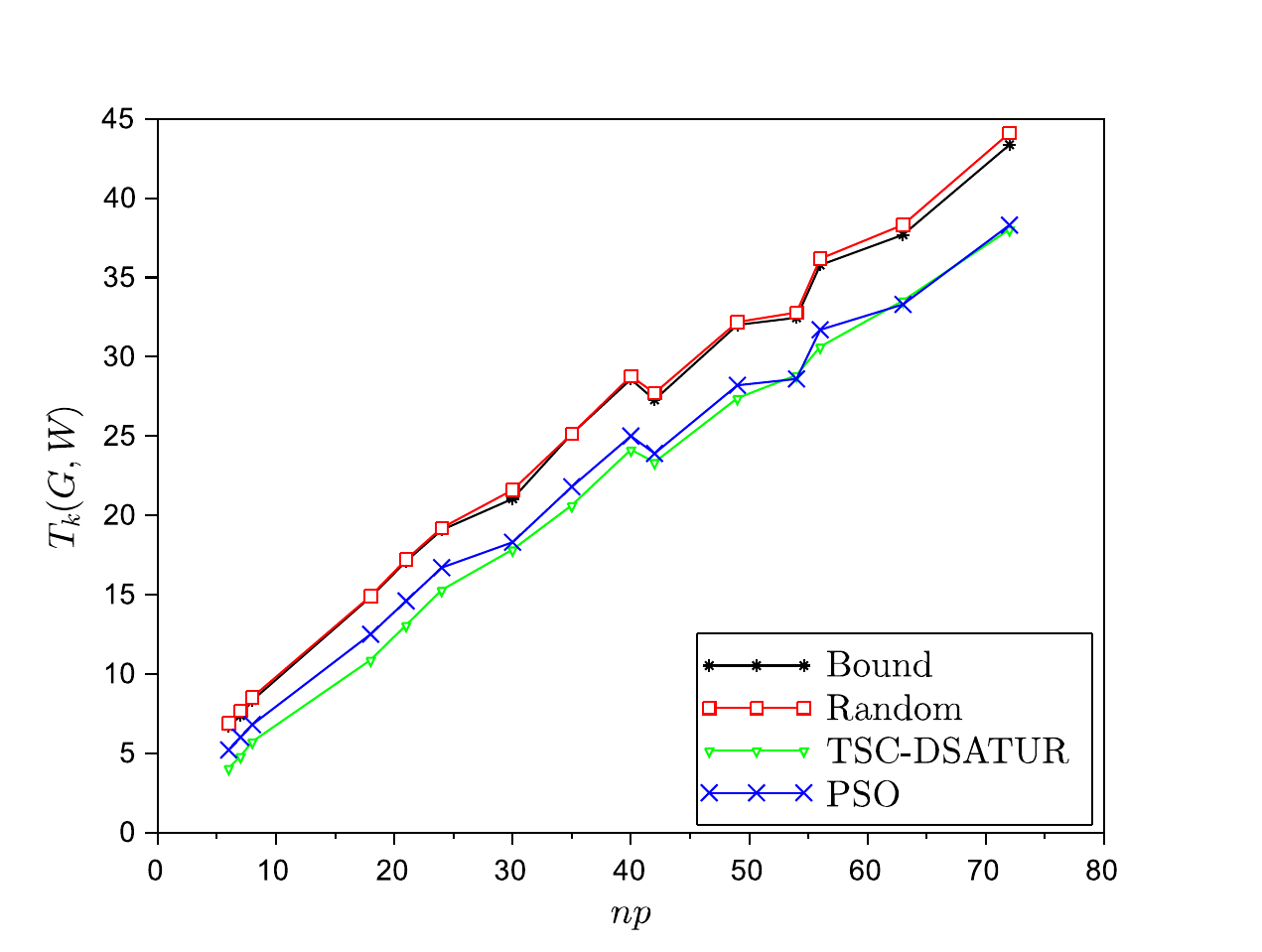}}
\end{minipage}%
\begin{minipage}{.5\linewidth}
\centering
\subfloat[$k=6$.]{\label{fig:np_6}\includegraphics[width=1\textwidth]{./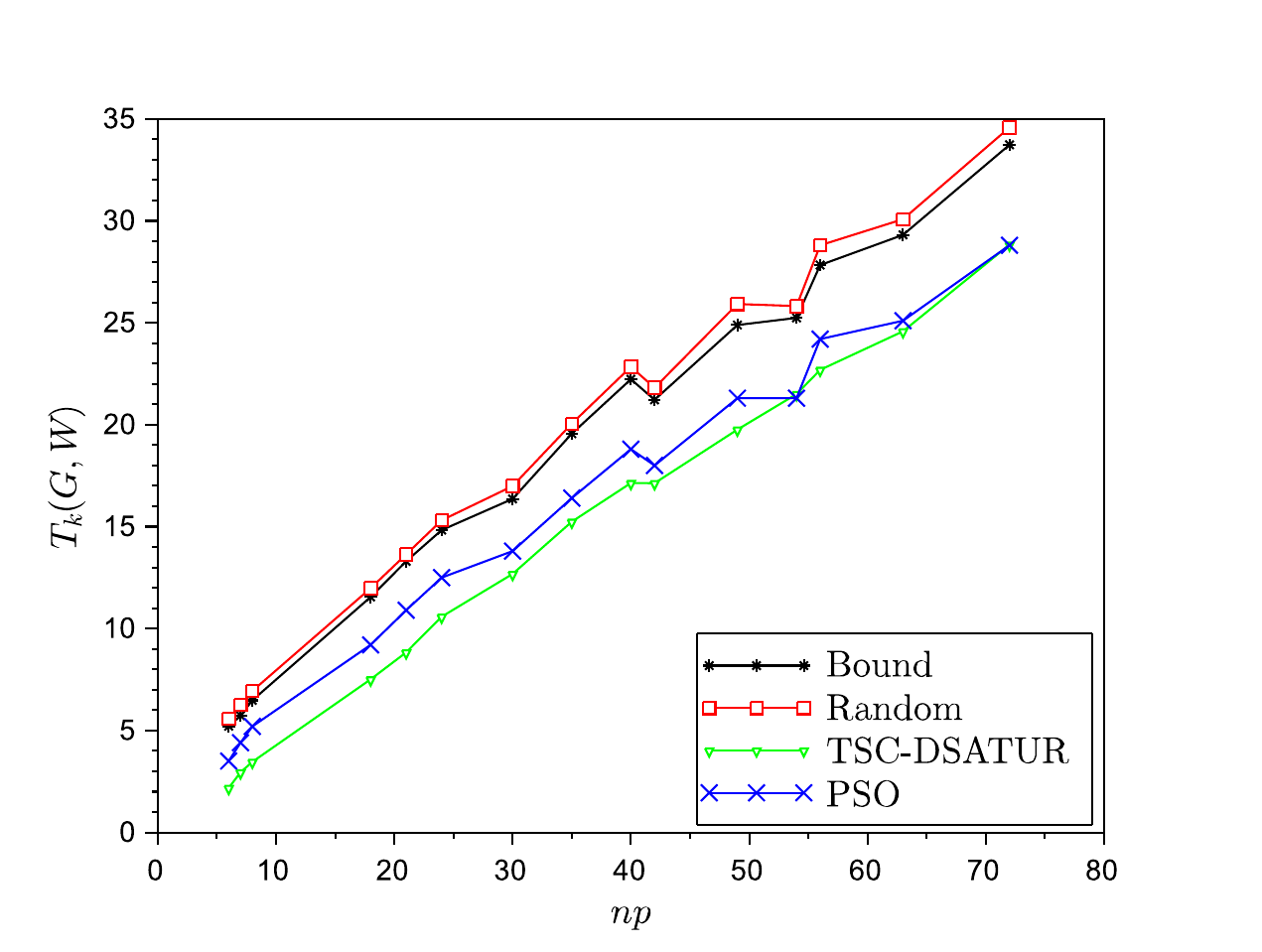}}
\end{minipage}\par\medskip
\begin{minipage}{.5\linewidth}
\centering
\subfloat[$k=11$.]{\label{fig:np_11}\includegraphics[width=1\textwidth] {./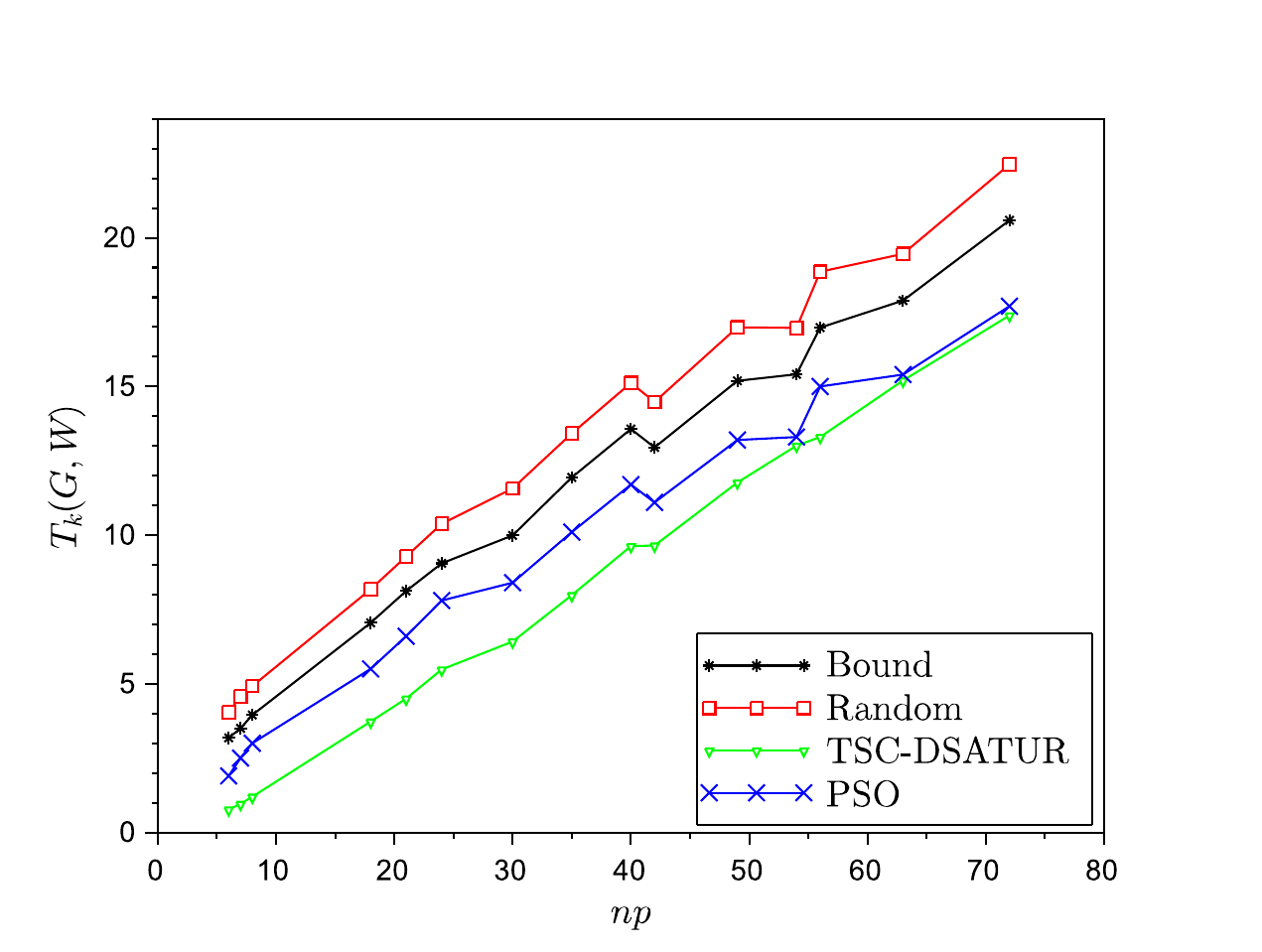}}
\end{minipage}%
\begin{minipage}{.5\linewidth}
\centering
\subfloat[Running times.]{\label{fig:np_timesTSC}\includegraphics[width=1\textwidth]{./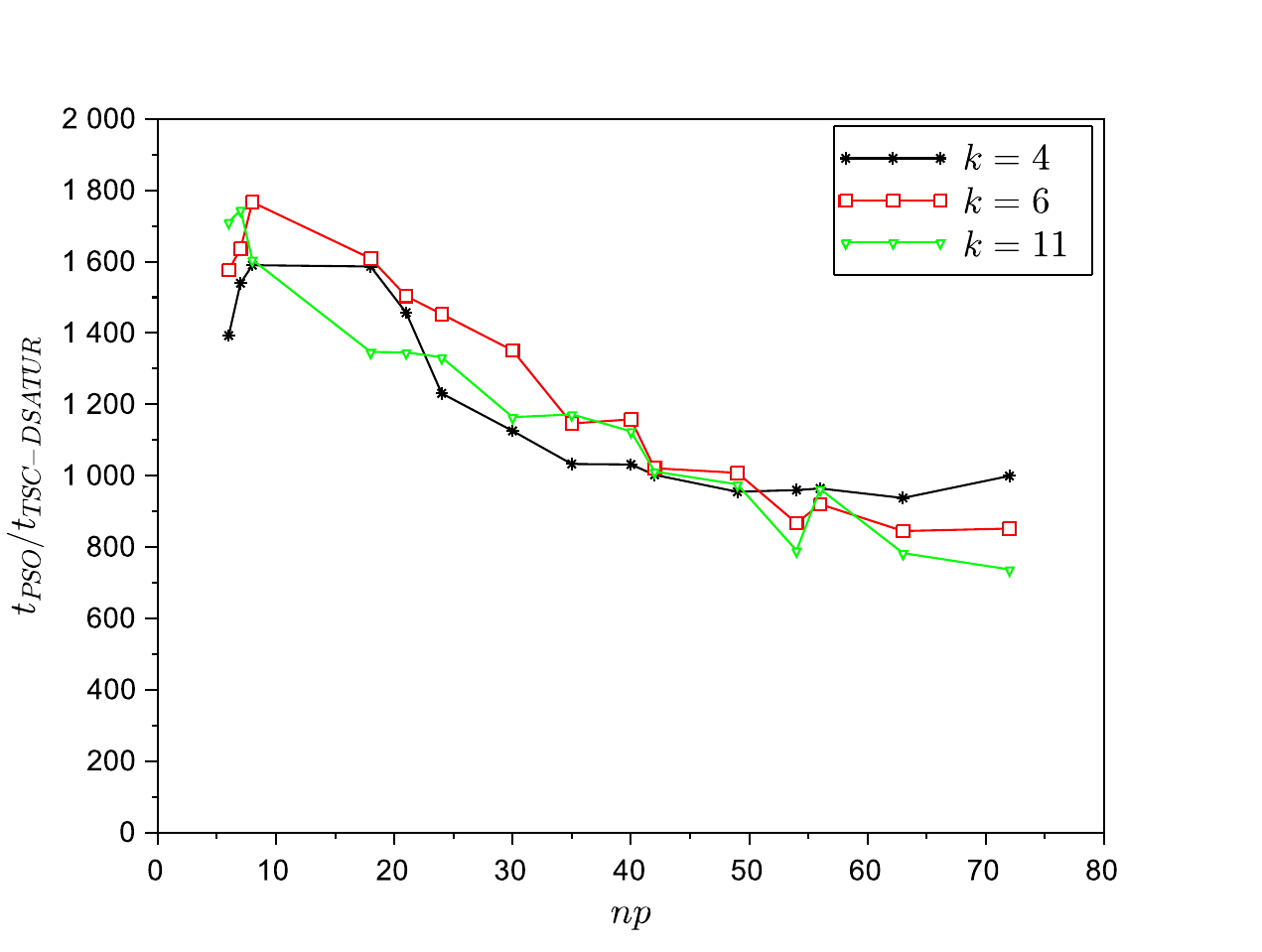}}
\end{minipage}\par\medskip
\caption{Effect of expected average degree of the graphs ($np$) in the \textsc{Threshold Spectrum Coloring} problem.}
\label{fig:np_TSC}
\end{figure}

Regarding \textsc{Chromatic Spectrum Coloring}, Table~\ref{tab:results_bounds_CSC} shows the value of the tight theoretical bound given in Theorem~\ref{theorem:CSCbound} and the gap, for the graphs considered, between the best experimental result and the theoretical bound. The main conclusion is that the gap for the theoretical upper bound again gets narrower when the complexity of the graph increases.

\begin{table}[!htb]

\caption{Bounds for the number of colors $\chi_t(G,W)$ for CSC.}
\label{tab:results_bounds_CSC}
\renewcommand{\tabcolsep}{9pt}
\renewcommand{\arraystretch}{1.1}
\centering
\footnotesize
\begin{tabular}{|cc|cc|cc|cc|}
\cline{3-8}
\multicolumn{1}{c}{} & \multicolumn{1}{c|}{} & \multicolumn{2}{c}{$t=np/4$} & \multicolumn{2}{|c}{$t=np/2$} & \multicolumn{2}{|c|}{$t=3np/4$}\\
\multicolumn{1}{c}{$n$} & \multicolumn{1}{c|}{$p$} & \multicolumn{1}{c}{Bound} & \multicolumn{1}{c|}{Gap (\%)} & \multicolumn{1}{c}{Bound} & \multicolumn{1}{c|}{Gap (\%)} & \multicolumn{1}{c}{Bound} & \multicolumn{1}{c|}{Gap (\%)}\\
\hline
\multirow{5}{*}{60}
 & 0.1 & 24 & 62.9 & 12 & 52.5 & 8 & 47.5\\
 & 0.3 & 18 & 41.1 & 9 & 31.1 & 6 & 33.8\\
 & 0.5 & 15 & 26.7 & 8 & 27.5 & 5 & 34.5\\
 & 0.7 & 14 & 17.1 & 7 & 25.3 & 5 & 40.0\\
 & 0.9 & 13 & 14.0 & 7 & 28.6 & 5 & 40.0\\
\hline
\multirow{5}{*}{70}
 & 0.1 & 23 & 58.3 & 12 & 49.2 & 8 & 48.8\\
 & 0.3 & 18 & 40.0 & 9 & 33.3 & 6 & 33.3\\
 & 0.5 & 16 & 28.8 & 8 & 25.1 & 6 & 43.3\\
 & 0.7 & 14 & 15.7 & 7 & 23.7 & 5 & 40.0\\
 & 0.9 & 13 & 15.4 & 7 & 28.6 & 5 & 40.0\\
\hline
\multirow{5}{*}{80}
 & 0.1 & 23 & 55.7 & 12 & 45.8 & 8 & 47.5\\
 & 0.3 & 17 & 35.3 & 9 & 33.3 & 6 & 33.3\\
 & 0.5 & 16 & 27.5 & 8 & 23.9 & 6 & 44.3\\
 & 0.7 & 14 & 17.9 & 7 & 24.9 & 5 & 40.0\\
 & 0.9 & 13 & 13.2 & 7 & 28.6 & 5 & 40.0\\
\hline
\end{tabular}

\bigskip

\caption{Number of colors $\chi_t(G,W)$ and running times for CSC with $t=np/4$.}
\label{tab:results_CSC025}
\renewcommand{\tabcolsep}{9pt}
\renewcommand{\arraystretch}{1.1}
\centering
\footnotesize
\begin{tabular}{|cc|cc|ccc|ccc|}
\cline{3-10}
\multicolumn{1}{c}{} & \multicolumn{1}{c|}{} & \multicolumn{2}{c}{Random} & \multicolumn{3}{|c}{CSC-DSATUR} & \multicolumn{3}{|c|}{PSO} \\
\multicolumn{1}{c}{} & \multicolumn{1}{c|}{} & \multicolumn{2}{c}{$\chi_t(G,W)$} & \multicolumn{2}{|c}{$\chi_t(G,W)$} & \multicolumn{1}{c}{Time} & \multicolumn{2}{|c}{$\chi_t(G,W)$} & \multicolumn{1}{c|}{Time}\\
\multicolumn{1}{c}{$n$} & \multicolumn{1}{c|}{$p$} & \multicolumn{1}{c}{avg} & \multicolumn{1}{c|}{std} & \multicolumn{1}{c}{avg} & \multicolumn{1}{c}{std} & \multicolumn{1}{c|}{avg} & \multicolumn{1}{c}{avg} & \multicolumn{1}{c}{std} & \multicolumn{1}{c|}{avg} \\
\hline
\multirow{5}{*}{60}
& 0.1 & 32.0 & 1.4 & \textbf{8.9} & 0.7 & 3.9 ms & 12.5 & 0.7 & 6.0 s\\
& 0.3 & 21.7 & 0.7 & \textbf{10.6} & 0.8 & 12.3 ms & 13.3 & 0.7 & 6.9 s\\
& 0.5 & 18.1 & 0.7 & \textbf{11.0} & 0.8 & 30.0 ms & 12.6 & 0.6 & 8.3 s\\
& 0.7 & 16.2 & 0.3 & \textbf{11.6} & 0.5 & 56.0 ms & 12.1 & 0.2 & 9.5 s\\
& 0.9 & 14.7 & 0.3 & \textbf{11.2} & 0.4 & 99.9 ms & \textbf{11.2} & 0.1 & 10.6 s\\
\hline
\multirow{5}{*}{70}
& 0.1 & 32.7 & 1.3 & \textbf{9.6} & 0.7 & 5.4 ms & 13.9 & 0.8 & 8.2 s\\
& 0.3 & 22.0 & 0.5 & \textbf{10.8} & 0.6 & 18.4 ms & 13.9 & 0.4 & 8.9 s\\
& 0.5 & 18.3 & 0.3 & \textbf{11.4} & 0.5 & 45.6 ms & 13.2 & 0.4 & 11.0 s\\
& 0.7 & 16.3 & 0.2 & \textbf{11.8} & 0.4 & 89.9 ms & 12.4 & 0.3 & 12.9 s\\
& 0.9 & 14.6 & 0.3 & \textbf{11.0} & 0.0 & 154.9 ms & 11.3 & 0.1 & 14.3 s\\
\hline
\multirow{5}{*}{80}
& 0.1 & 34.9 & 1.7 & \textbf{10.2} & 1.0 & 6.7 ms & 15.2 & 0.7 & 10.1 s\\
& 0.3 & 22.0 & 0.5 & \textbf{11.0} & 0.6 & 27.7 ms & 14.4 & 0.4 & 11.2 s\\
& 0.5 & 18.7 & 0.2 & \textbf{11.6} & 0.5 & 70.0 ms & 13.5 & 0.3 & 13.7 s\\
& 0.7 & 16.2 & 0.5 & \textbf{11.5} & 0.5 & 129.7 ms & 12.4 & 0.2 & 15.8 s\\
& 0.9 & 14.4 & 0.3 & 11.5 & 0.5 & 230.7 ms & \textbf{11.3} & 0.1 & 18.2 s\\
\hline
\end{tabular}
\end{table}

Tables~\ref{tab:results_CSC025}, \ref{tab:results_CSC05} and~\ref{tab:results_CSC075} show the results obtained with a threshold~$t$ on the maximum interference per vertex equal to $np/4$, $np/2$ and $3np/4$, respectively. Again, each row corresponds to a different graph category, given by $np$ and again we have marked the best values in boldface. In columns, we have represented the average and standard deviation for the number of colors $\chi_t(G,W)$ achieved for each graph coloring approach evaluated together with the running time needed to obtain those results. Keep in mind that, as mentioned in Section~\ref{sec:techniques}, both {PSO} and the {random} reference are implemented for CSC in an iterative manner, that is, they are run with an increasing number of candidate colors until we find the smaller value~$\chi_t$ for which the result satisfies the threshold $t$. We can observe again that our heuristic CSC-DSATUR generally outperforms PSO, especially for the lowest values of $p$. It is also interesting to note that the difference between the results obtained by our heuristic and the random reference are significantly lower as the average degree of the graph (which depends on~$p$) increases. Although this may seem counter-intuitive, we must keep in mind that we are adjusting the threshold~$t$ proportionally with the expected average degree~$np$, which keeps around the same values the minimum number of different colors needed by the solutions whose interference falls below the threshold. Finding such a solution randomly is a different issue, however, and it becomes easier when the expected average degree increases, since the proportion of such solutions in the global solution space increases when increasing the degrees of freedom (i.e., the number of edges) per vertex. Finally, in Figure~\ref{fig:time_CSC} we show the quotient between the running times required to obtain the results using PSO and CSC-DSATUR, concluding that the time gain obtained when using CSC-DSATUR decreases as the expected average degree of the graph, $np$, increases, although this gain is at least of two orders of magnitude.

\begin{table}[!htb]

\caption{Number of colors $\chi_t(G,W)$ and running times for CSC with $t=np/2$.}
\label{tab:results_CSC05}
\renewcommand{\tabcolsep}{9pt}
\renewcommand{\arraystretch}{1.1}
\centering
\footnotesize
\begin{tabular}{|cc|cc|ccc|ccc|}
\cline{3-10}
\multicolumn{1}{c}{} & \multicolumn{1}{c|}{} & \multicolumn{2}{c}{Random} & \multicolumn{3}{|c}{CSC-DSATUR} & \multicolumn{3}{|c|}{PSO} \\
\multicolumn{1}{c}{} & \multicolumn{1}{c|}{} & \multicolumn{2}{c}{$\chi_t(G,W)$} & \multicolumn{2}{|c}{$\chi_t(G,W)$} & \multicolumn{1}{c}{Time} & \multicolumn{2}{|c}{$\chi_t(G,W)$} & \multicolumn{1}{c|}{Time}\\
\multicolumn{1}{c}{$n$} & \multicolumn{1}{c|}{$p$} & \multicolumn{1}{c}{avg} & \multicolumn{1}{c|}{std} & \multicolumn{1}{c}{avg} & \multicolumn{1}{c}{std} & \multicolumn{1}{c|}{avg} & \multicolumn{1}{c}{avg} & \multicolumn{1}{c}{std} & \multicolumn{1}{c|}{avg} \\
\hline
\multirow{5}{*}{60}
& 0.1 & 15.0 & 1.0 & \textbf{5.7} & 0.5 & 3.4 ms & 7.4 & 0.5 & 4.9 s\\
& 0.3 & 9.4 & 0.5 & \textbf{6.2} & 0.4 & 10.1 ms & 6.6 & 0.3 & 5.8 s\\
& 0.5 & 7.6 & 0.4 & \textbf{5.8} & 0.6 & 23.7 ms & \textbf{5.8} & 0.3 & 6.8 s\\
& 0.7 & 6.8 & 0.2 & 5.4 & 0.5 & 43.3 ms & \textbf{5.2} & 0.2 & 7.7 s\\
& 0.9 & 6.1 & 0.2 & \textbf{5.0} & 0.0 & 63.8 ms & \textbf{5.0} & 0.0 & 8.5 s\\
\hline
\multirow{5}{*}{70}
& 0.1 & 14.7 & 1.1 & \textbf{6.1} & 0.3 & 4.6 ms & 8.0 & 0.6 & 6.0 s\\
& 0.3 & 9.1 & 0.4 & \textbf{6.0} & 0.6 & 15.4 ms & 6.7 & 0.3 & 7.2 s\\
& 0.5 & 7.7 & 0.2 & 6.1 & 0.3 & 34.6 ms & \textbf{6.0} & 0.2 & 8.9 s\\
& 0.7 & 6.8 & 0.3 & 5.4 & 0.5 & 65.6 ms & \textbf{5.3} & 0.2 & 10.3 s\\
& 0.9 & 6.0 & 0.1 & \textbf{5.0} & 0.0 & 99.1 ms & \textbf{5.0} & 0.0 & 11.6 s\\
\hline
\multirow{5}{*}{80}
& 0.1 & 14.0 & 0.8 & \textbf{6.5} & 0.5 & 6.0 ms & 8.1 & 0.6 & 7.0 s\\
& 0.3 & 9.0 & 0.3 & \textbf{6.0} & 0.4 & 22.4 ms & 6.7 & 0.2 & 8.7 s\\
& 0.5 & 7.6 & 0.3 & \textbf{6.1} & 0.3 & 49.9 ms & \textbf{6.1} & 0.2 & 11.1 s\\
& 0.7 & 6.7 & 0.2 & 5.7 & 0.5 & 95.6 ms & \textbf{5.3} & 0.1 & 12.6 s\\
& 0.9 & 6.2 & 0.1 & \textbf{5.0} & 0.0 & 145.4 ms & \textbf{5.0} & 0.0 & 15.1 s\\
\hline
\end{tabular}

\bigskip

\caption{Number of colors $\chi_t(G,W)$ and running times for CSC with $t=3np/4$.}
\label{tab:results_CSC075}
\renewcommand{\tabcolsep}{9pt}
\renewcommand{\arraystretch}{1.1}
\centering
\footnotesize
\begin{tabular}{|cc|cc|ccc|ccc|}
\cline{3-10}
\multicolumn{1}{c}{} & \multicolumn{1}{c|}{} & \multicolumn{2}{c}{Random} & \multicolumn{3}{|c}{CSC-DSATUR} & \multicolumn{3}{|c|}{PSO} \\
\multicolumn{1}{c}{} & \multicolumn{1}{c|}{} & \multicolumn{2}{c}{$\chi_t(G,W)$} & \multicolumn{2}{|c}{$\chi_t(G,W)$} & \multicolumn{1}{c}{Time} & \multicolumn{2}{|c}{$\chi_t(G,W)$} & \multicolumn{1}{c|}{Time}\\
\multicolumn{1}{c}{$n$} & \multicolumn{1}{c|}{$p$} & \multicolumn{1}{c}{avg} & \multicolumn{1}{c|}{std} & \multicolumn{1}{c}{avg} & \multicolumn{1}{c}{std} & \multicolumn{1}{c|}{avg} & \multicolumn{1}{c}{avg} & \multicolumn{1}{c}{std} & \multicolumn{1}{c|}{avg} \\
\hline
\multirow{5}{*}{60}
& 0.1 & 8.4 & 0.9 & \textbf{4.2} & 0.6 & 3.5 ms & 5.1 & 0.5 & 4.6 s\\
& 0.3 & 5.1 & 0.4 & \textbf{4.0} & 0.0 & 10.0 ms & \textbf{4.0} & 0.3 & 5.5 s\\
& 0.5 & 4.1 & 0.3 & 3.5 & 0.5 & 21.1 ms & \textbf{3.3} & 0.3 & 6.5 s\\
& 0.7 & 3.6 & 0.1 & \textbf{3.0} & 0.0 & 35.6 ms & \textbf{3.0} & 0.0 & 7.3 s\\
& 0.9 & \textbf{3.0} & 0.0 & \textbf{3.0} & 0.0 & 49.6 ms & \textbf{3.0} & 0.0 & 8.6 s\\
\hline
\multirow{5}{*}{70}
& 0.1 & 8.2 & 0.8 & \textbf{4.1} & 0.5 & 4.5 ms & 5.2 & 0.4 & 5.9 s\\
& 0.3 & 5.1 & 0.3 & \textbf{4.0} & 0.0 & 14.4 ms & \textbf{4.0} & 0.2 & 7.2 s\\
& 0.5 & 4.2 & 0.2 & \textbf{3.4} & 0.5 & 32.6 ms & 3.5 & 0.3 & 8.4 s\\
& 0.7 & 3.5 & 0.2 & \textbf{3.0} & 0.0 & 52.7 ms & \textbf{3.0} & 0.0 & 9.8 s\\
& 0.9 & 3.1 & 0.1 & \textbf{3.0} & 0.0 & 74.1 ms & \textbf{3.0} & 0.0 & 11.1 s\\
\hline
\multirow{5}{*}{80}
& 0.1 & 7.8 & 0.7 & \textbf{4.2} & 0.4 & 5.7 ms & 5.2 & 0.4 & 6.7 s\\
& 0.3 & 5.0 & 0.3 & \textbf{4.0} & 0.0 & 20.0 ms & 4.1 & 0.2 & 8.9 s\\
& 0.5 & 4.2 & 0.2 & 3.8 & 0.4 & 45.3 ms & \textbf{3.3} & 0.2 & 10.7 s\\
& 0.7 & 3.4 & 0.2 & \textbf{3.0} & 0.0 & 76.9 ms & \textbf{3.0} & 0.0 & 12.1 s\\
& 0.9 & 3.1 & 0.1 & \textbf{3.0} & 0.0 & 113.4 ms & \textbf{3.0} & 0.0 & 14.5 s\\
\hline
\end{tabular}
\end{table}

\begin{figure}[!htb]
\begin{center}
\includegraphics[width=0.6\textwidth]{./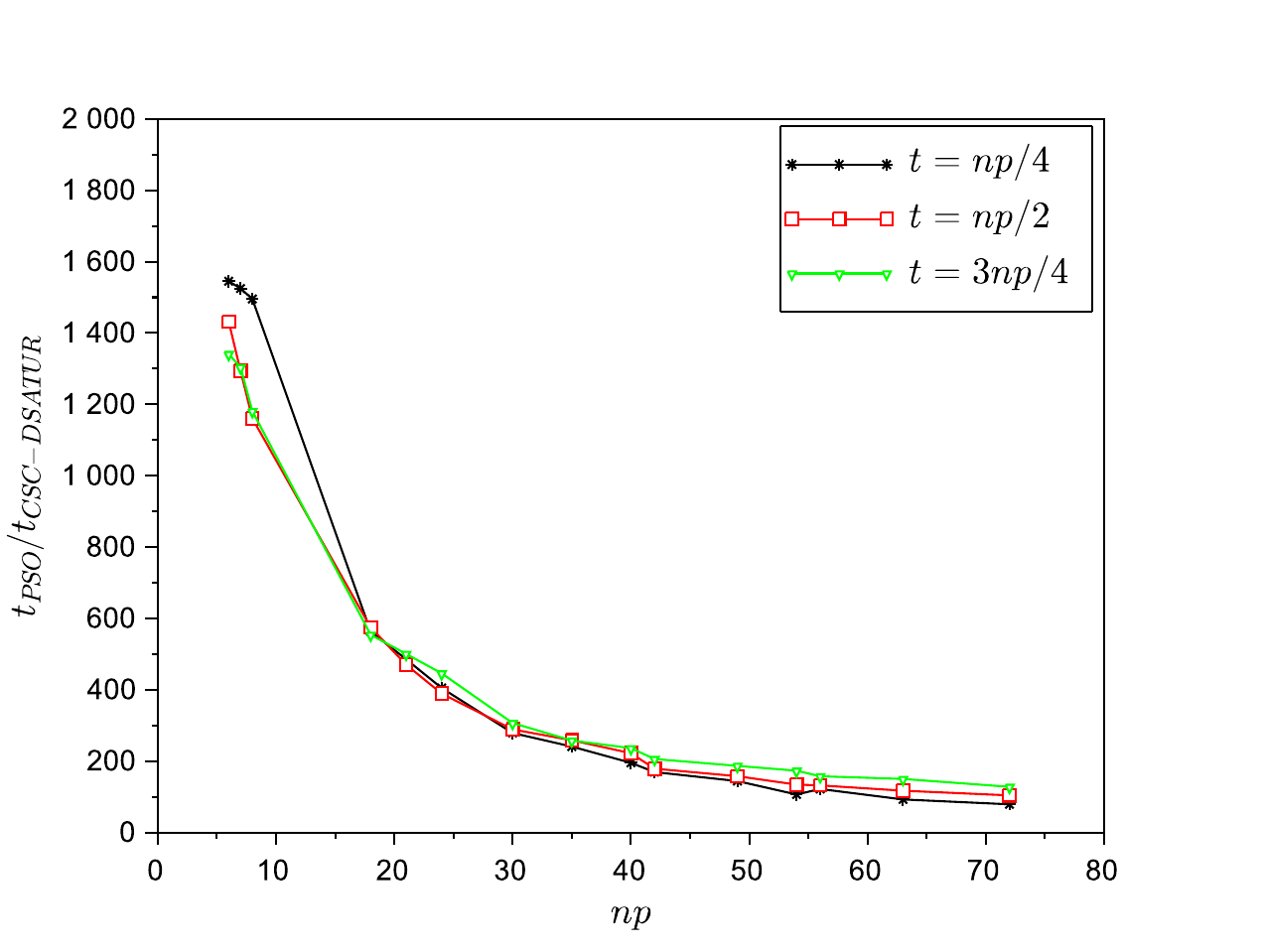}
\end{center}
\caption{Comparison between the running times of CSC-DSATUR and PSO for the CSC problem.} \label{fig:time_CSC}
\end{figure}

\section{Experimental results in a real setting}

In addition to the experimental results presented in Section~\ref{results}, we have also performed an evaluation of our proposal in a realistic setting. The problem consists of assigning a Wi-Fi channel to each of the deployed APs of a Wi-Fi network in order to minimize the interference threshold. As we have a fixed number of channels (colors) $k=11$ and want to minimize the maximum of the interferences at the vertices, we have a TSC problem.
More specifically, we have made use of the real layout of the Wi-Fi network in the Polytechnic School of our university, as shown in Figure~\ref{fig:eps}. The sides of this building, approximately square-shaped, are 130 meters long. Its main features are its 48 classrooms and its central courtyard. Moreover, we have made use of the real positions of the 26 access points (APs) deployed in the building (green dots in Figure~\ref{fig:eps}).

\begin{figure}[!htb]
\begin{minipage}{.5\linewidth}
\centering
\subfloat[Scenario plan.]{\label{fig:eps}\includegraphics[width=0.98\textwidth] {./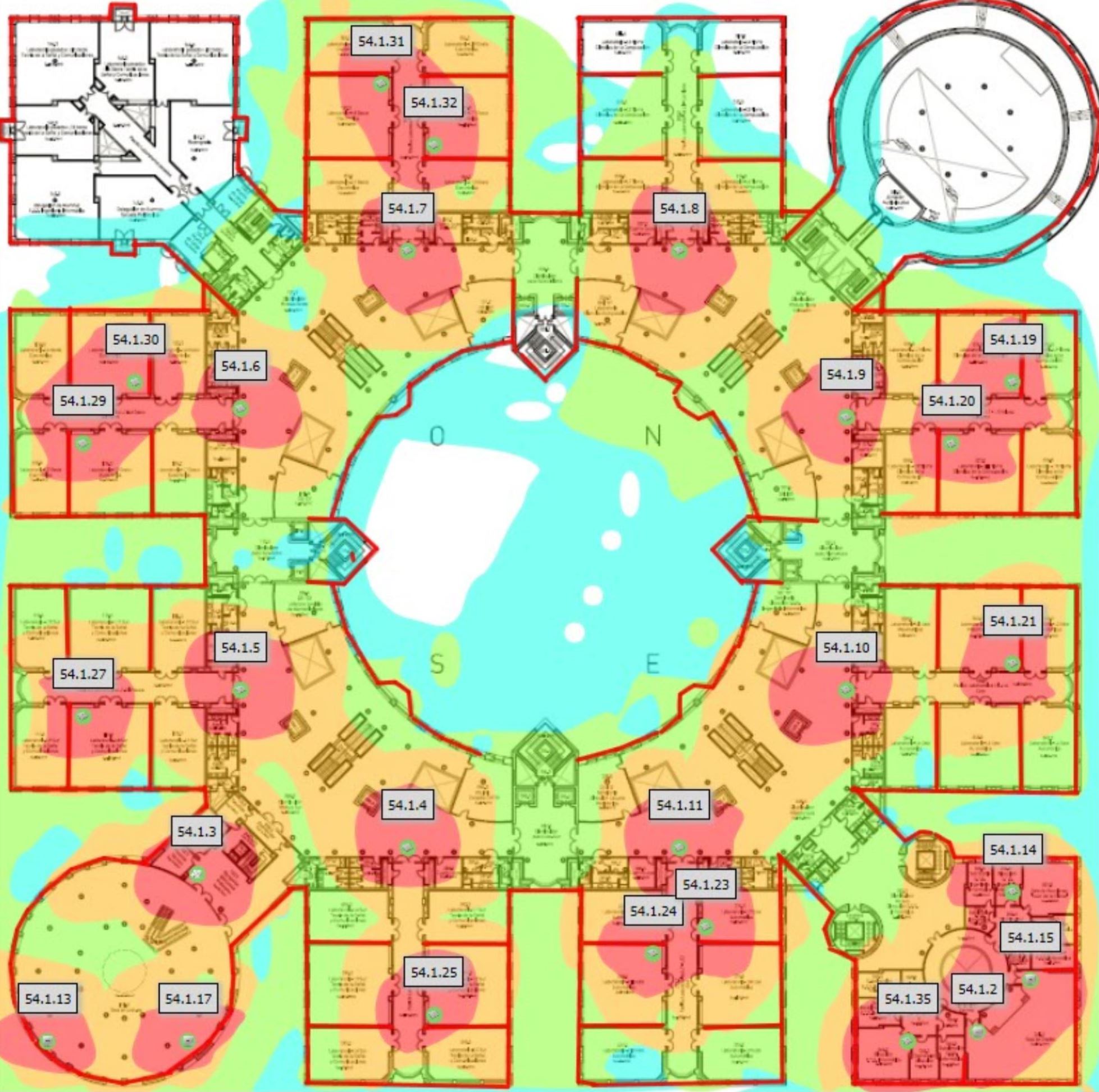}}
\end{minipage}%
\begin{minipage}{.5\linewidth}
\centering
\subfloat[Graph model for LD.]{\label{fig:eps_ld}\includegraphics[width=1\textwidth]{./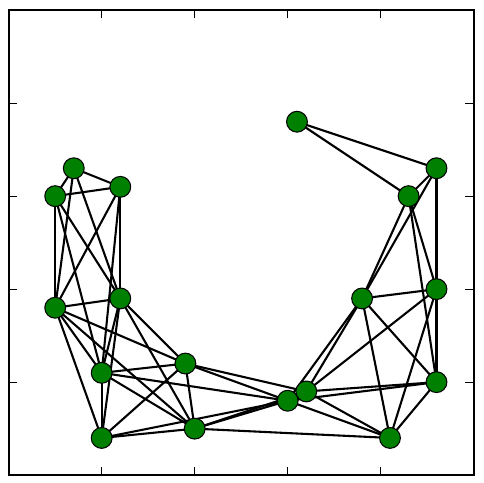}}
\end{minipage}\par\medskip
\begin{minipage}{.5\linewidth}
\centering
\subfloat[Graph model for MD.]{\label{fig:eps_md}\includegraphics[width=1\textwidth] {./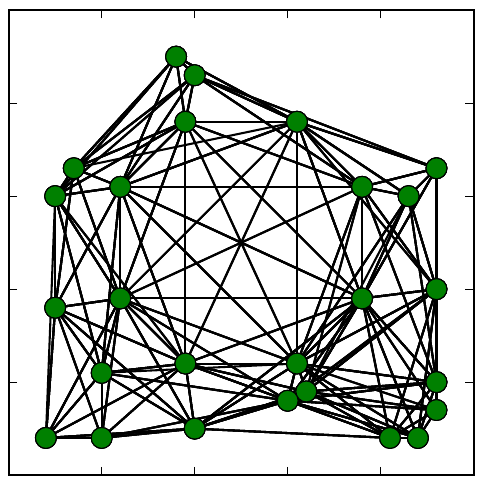}}
\end{minipage}%
\begin{minipage}{.5\linewidth}
\centering
\subfloat[Graph model for HD.]{\label{fig:eps_hd}\includegraphics[width=1\textwidth]{./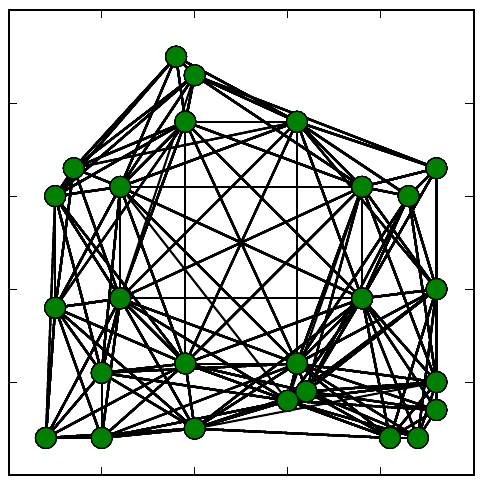}}
\end{minipage}\par\medskip
\caption{Scenario and graph models for the real setting.}
\label{fig:eps_model}
\end{figure}
The figure also shows the signal strength from each AP ranging from high coverage (red) to low coverage (blue). Note the low signal coverage for the central courtyard. For this layout, we have considered three different and representative scenarios, depending on the number of network users. In the first one, called Low Density (LD), we consider that there are only 10 classrooms in use with 20 students each, making a total of 200 users. Note that in all scenarios the position of the students in a classroom has been taken from a normal distribution centered around the center of each class and a standard deviation of 0.05 (considered that the size of the scenario is normalized to 1). In the second one, called Medium Density (MD), we consider that there are 100 students uniformly distributed in the building and also 24 classrooms in use with 25 students each, for a total of 700 network users. Finally, the third scenario, called High Density (HD), extends MD to have all the 48 classrooms in use, for a total of 1300 users. To model each scenario, we have considered a graph, whose vertices are the APs, removing those APs which do not have any client attached to it. Note that we have considered that clients are attached to their closest AP. Moreover, the edges of the graph represent the possible interferences between two APs if they use a Wi-Fi channel that is close enough. We have taken into account not only the interferences due to AP transmissions, but also from clients, i.e. two APs $i$ and $j$ will be connected in the graph if the AP $i$ or any of its attached clients interferes with the AP $j$ or with any of its clients. Figures~\ref{fig:eps_ld}, \ref{fig:eps_md}, and \ref{fig:eps_hd}, show the resulting graph for each of the three scenarios. For the interferences between channels, we have used the values empirically measured in~\cite{AkyildizWiFi}, shown in Table~\ref{tab:matrix}, which are similar to those defined in~\cite{Shrivastava08}.

\begin{table}[!htb]
\caption{Interference matrix $W_{ij}$ used in the real setting \cite{AkyildizWiFi}.}
\label{tab:matrix}
\renewcommand{\tabcolsep}{9pt}
\renewcommand{\arraystretch}{1.1}
\centering
\footnotesize
\begin{tabular}{c|c|c|c|c|c|c|c|}
\cline{2-8}
$|i-j|$ & 0 & 1 & 2 & 3 & 4 & 5 & $\geq 6$\\
\cline{2-8}
$W_{ij}$ & 1 & 0.8 & 0.5 & 0.2 & 0.1 & 0.001 & 0\\
\cline{2-8}
\end{tabular}
\end{table}

For the three above-mentioned scenarios (LD, MD, and HD), Table~\ref{tab:bounds_TSCwifi} shows the computed theoretical bounds according to Theorem~\ref{theorem:TSCbound} and the gap between the best experimental result obtained and this bound.

\begin{table}[!htb]
\caption{Bounds for the maximum vertex interference $T_k(G,W)$ for TSC in the Wi-Fi scenario with $k=11$.}
\label{tab:bounds_TSCwifi}
\renewcommand{\tabcolsep}{9pt}
\renewcommand{\arraystretch}{1.1}
\centering
\footnotesize
\begin{tabular}{|c|c|c|}
\cline{2-3}
\multicolumn{1}{c}{Scenario} & \multicolumn{1}{|c|}{Bound} & \multicolumn{1}{c|}{Gap (\%)}\\
\hline
\multicolumn{1}{|c|}{LD} & 3.1 & 45.5\\
\multicolumn{1}{|c|}{MD} & 6.5 & 44.1\\
\multicolumn{1}{|c|}{HD} & 6.1 & 37.3\\
\hline
\end{tabular}
\end{table}

As in the previous examples, the gap decreases with the increase on complexity of the graphs. On the other hand, Table~\ref{tab:results_TSCwifi} shows the average and standard deviation of the maximum vertex interference $T_k(G,W)$ and the computation times for the different techniques under study.

\begin{table}[!htb]
\caption{Maximum vertex interference $T_k(G,W)$ and running times for TSC in the Wi-Fi scenario with $k=11$.}
\label{tab:results_TSCwifi}
\renewcommand{\tabcolsep}{9pt}
\renewcommand{\arraystretch}{1.1}
\centering
\footnotesize
\begin{tabular}{c|cc|ccc|ccc|}
\cline{2-9}
\multirow{3}{*}{Scenario} & \multicolumn{2}{c}{Random} & \multicolumn{3}{|c}{TSC-DSATUR} & \multicolumn{3}{|c|}{PSO}\\
& \multicolumn{2}{|c|}{$T_k(G,W)$} & \multicolumn{2}{|c}{$T_k(G,W)$} & \multicolumn{1}{c|}{Time} & \multicolumn{2}{|c}{$T_k(G,W)$} & \multicolumn{1}{c|}{Time}\\

& \multicolumn{1}{c}{avg} & \multicolumn{1}{c|}{std} & \multicolumn{1}{c}{avg} & \multicolumn{1}{c}{std} & \multicolumn{1}{c|}{avg} & \multicolumn{1}{|c}{avg} & \multicolumn{1}{c}{std} & \multicolumn{1}{c|}{std}\\
\hline
\multicolumn{1}{|c|}{LD} & 3.9 & 0.7 & 1.8 & 0.2 & 0.7 ms & \textbf{1.7} & 0.4 & 0.9 s\\
\multicolumn{1}{|c|}{MD} & 6.5 & 1.1 & \textbf{3.6} & 0.1 & 1.4 ms & 3.7 & 0.5 & 1.6 s\\
\multicolumn{1}{|c|}{HD} & 6.9 & 1.1 & \textbf{3.8} & 0.4 & 1.4 ms & 4.0 & 0.4 & 1.9 s\\
\hline
\end{tabular}
\end{table}
Results show that, clearly, Random offers always the worst performance and that the performance of TSC-DSATUR and PSO is fairly similar, although the running times of PSO are 1316.3, 1176.2 and 1355.2 times higher than the running times required by TSC-DSATUR, for LD, MD and HD, respectively. Finally, we are interested not only in the maximum interference, but also in the interference experienced by each AP, to study if the coloring obtained by TSC-DSATUR is similar to the one obtained by PSO. Figure~\ref{fig:heat_map} studies this behavior.

\begin{figure}[!htb]
\begin{minipage}{.5\linewidth}
\centering
\subfloat[LD.]{\label{fig:heat_ld}\includegraphics[trim={2.5cm 0.5cm 1.7cm 1cm},clip,width=1\textwidth] {./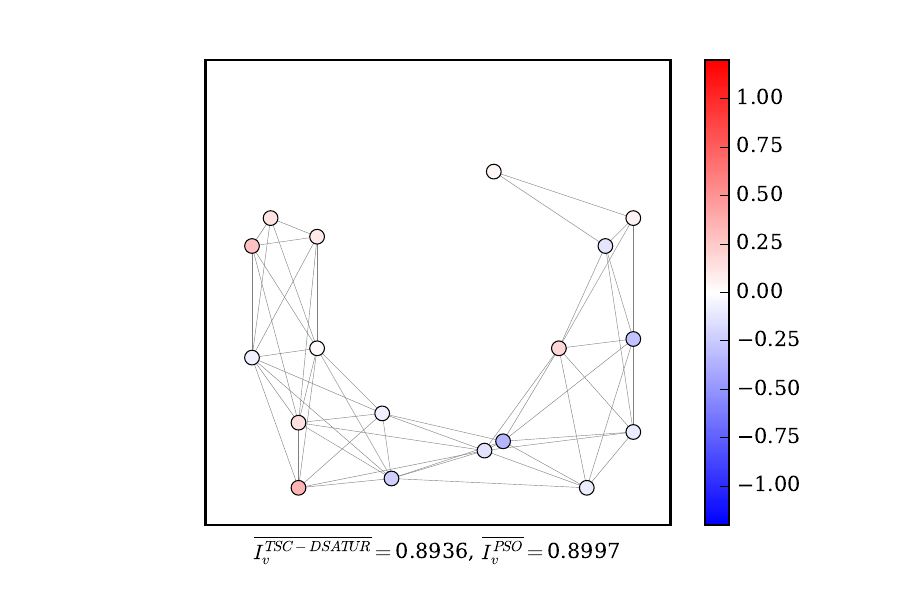}}
\end{minipage}%
\begin{minipage}{.5\linewidth}
\centering
\subfloat[MD.]{\label{fig:heat_md}\includegraphics[trim={2.5cm 0.5cm 1.7cm 1cm},clip, width=1\textwidth]{./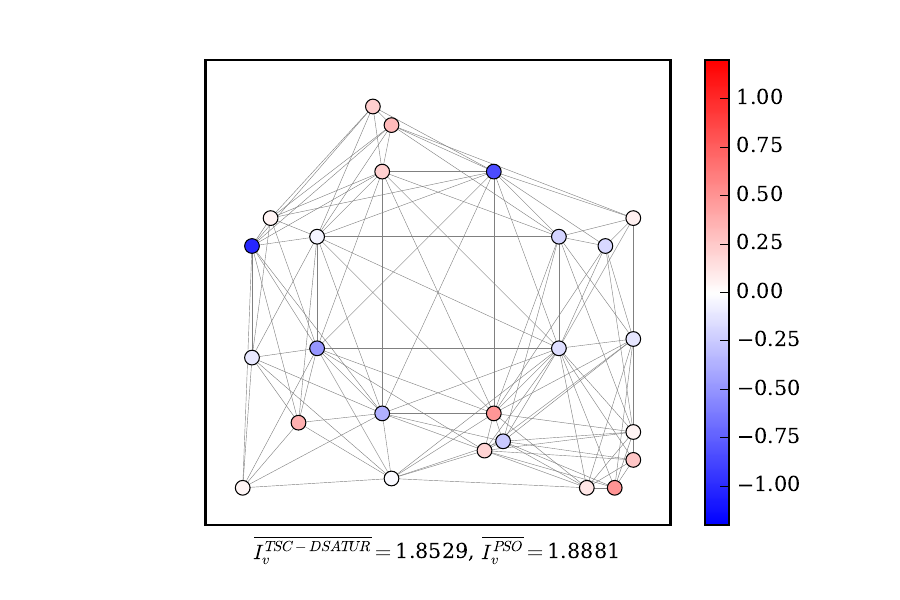}}
\end{minipage}\par\medskip
\centering
\subfloat[HD.]{\label{fig:heat_hd}\includegraphics[trim={2.5cm 0.5cm 1.7cm 1cm},clip,width=0.5\textwidth]{./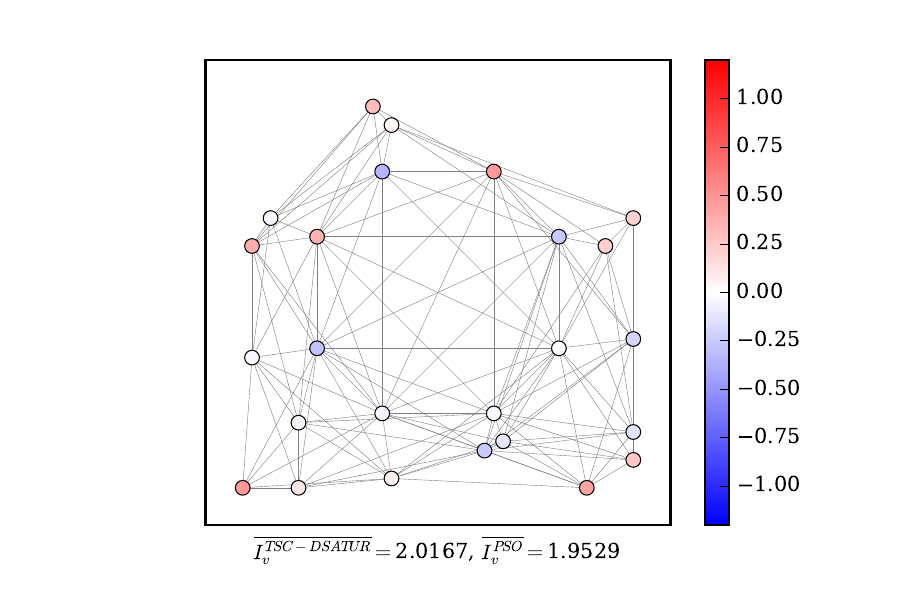}}
\caption{Heat map of the differences of interferences.}
\label{fig:heat_map}
\end{figure}

In this figure, for each AP we plot a heat map of the difference between the interference obtained by that AP using TSC-DSATUR ($I_v^{TSC-DSATUR}$) minus the interference obtained by that AP using PSO ($I_v^{PSO}$). Hence, in this figure an AP in red means that the interference of TSC-DSATUR is higher than that of PSO, and the opposite holds for blue. Results show that, in general, most of the colors are either light blue, light red or white, so the difference in performance between TSC-DSATUR and PSO is low. However, it is quite interesting to see that, for the MD scenario, there two dark blue APs, for which the performance of TSC-DSATUR is much higher than the one of PSO. Finally, in the lower part of Figure~\ref{fig:heat_map} we also show the average interference experienced by each AP. As we can see, TSC-DSATUR behaves slightly better than PSO for LD and MD, while in HD PSO has a better performance than TSC-DSATUR.

\section{Conclusions}

We have introduced two vertex-coloring problems for graphs, in which we are given a spectrum of colors endowed with a matrix of interferences between them. The \textsc{Threshold Spectrum Coloring} problem (TSC), motivated by Wi-Fi channel assignment, looks for the smallest threshold below which the interference at every vertex can be kept. The \textsc{Chromatic Spectrum Coloring} problem (CSC) is the complement, in which a threshold is given and the aim is to find the smallest number of colors allowing to respect that threshold. We have illustrated these two problems with a case study.

For such problems we have provided theoretical results, showing that both of them are NP-hard and proving upper bounds for the smallest threshold in TSC and the smallest number of colors in CSC.

In order to complete the scene, we have first presented experimental results for different coloring techniques, tested on Erd\H{o}s-Renyi random graphs of $n\in\{60,70,80\}$ vertices and probabilities $p \in \{0.1, 0.3, 0.5, 0.7, 0.9\}$  of connection, with an interference matrix of exponential decay. For the coloring techniques, we have proposed a DSATUR-based heuristic for each of the TSC and CSC problems, in order to account for the given spectrum. We have compared these heuristics with PSO, a nonlinear optimizer based in particle swarm, for TSC problems of $k\in\{4,6,11\}$ colors and CSC problems of thresholds $t\in\{np/4,np/2,3np/4\}$, where $np$ is the expected average degree of the graph.

Further, we have performed an evaluation of our proposal in a realistic setting, the Wi-Fi network in the Polytechnic School of our university.
We have made use of the real positions of the 26 access points (APs) deployed in the building and we have considered three different and representative scenarios, with 200, 700, and 1300 network users respectively.

The experiments show that the performance of our heuristics TSC-DSATUR and CSC-DSATUR is
better than PSO for the simplest graphs, although this improvement decreases as the complexity of the graphs increase. This makes sense, since higher degrees involve more interdependencies between vertex colorings, which is the kind of constraints nonlinear optimizers are designed for.

Finally, we have also checked the gap between the theoretical upper bounds and the best values obtained in the experiments, observing that the more complex is the graph, the smaller is this gap.

\section{Acknowledgements}

All the authors are supported by MINECO Projects TIN2014-61627-EXP and TIN2016-80622-P (AEI/FEDER, UE), and by the University of Alcal\'a project CCG2016/EXP-048. In addition, David Orden is supported by MINECO Project MTM2014-54207 and by the European Union's Horizon 2020 research and innovation programme under the Marie Sk\l{}odowska-Curie grant agreement No 734922.

\bibliographystyle{plain}

\end{document}